\documentclass[11pt]{article}

\usepackage[T1]{fontenc}
\usepackage[margin=1in]{geometry}
\usepackage{amsmath,amssymb,amsthm,mathtools}
\usepackage{algorithm}
\usepackage{algpseudocode}
\usepackage{array}
\usepackage{booktabs}
\usepackage{microtype}
\usepackage{tikz}
\usetikzlibrary{arrows.meta,calc,fit,positioning}
\usepackage{hyperref}
\hypersetup{
  hidelinks,
  pdftitle={Moore's Greedy Algorithm for Minimizing  the Number of Late Jobs:
 Structure and Implementation},
  pdfauthor={Dean Matthew Menezes and C. Gregory Plaxton}
}

\allowdisplaybreaks[2]
\numberwithin{equation}{section}

\newtheorem{lemma}{Lemma}[section]
\newtheorem{theorem}{Theorem}

\newtheorem{corollary}[theorem]{Corollary} 

\DeclareMathOperator{\GreedyOp}{greedy}
\DeclareMathOperator{\EdfOp}{edf}
\DeclareMathOperator{\SjfOp}{sjf}
\DeclareMathOperator{\BigOp}{big}
\DeclareMathOperator{\ShortOp}{short}
\DeclareMathOperator{\FeasibleOp}{feasible}
\DeclareMathOperator{\LexOp}{lex}
\DeclareMathOperator{\ProcOp}{work}
\DeclareMathOperator{\GapOp}{gap}
\DeclareMathOperator{\SlackOp}{slack}
\DeclareMathOperator{\LevelOp}{level}
\DeclareMathOperator{\RankOp}{rank}
\DeclareMathOperator{\IdsOp}{ids}
\DeclareMathOperator{\AddOp}{add}
\DeclareMathOperator{\AdmitsOp}{admits}
\DeclareMathOperator{\KeysOp}{keys}

\DeclareMathOperator*{\ArgMax}{arg\,max}
\DeclareMathOperator*{\ArgMin}{arg\,min}

\newcommand{\Prob}{1\mid\mid\sum U_j}

\newcommand{\PosReals}{\mathbb{R}_+}
\newcommand{\Seq}{\sigma}
\newcommand{\LessThan}[2]{#1^{<#2}}
\newcommand{\EqualTo}[2]{#1^{=#2}}
\newcommand{\AtMost}[2]{#1^{\leq #2}}
\newcommand{\Edf}[1]{\EdfOp(#1)}
\newcommand{\Sjf}[1]{\SjfOp(#1)}
\newcommand{\MyBig}[1]{\BigOp(#1)}
\newcommand{\Short}[1]{\ShortOp(#1)}
\newcommand{\ShortTwo}[2]{\ShortOp(#1,#2)}
\newcommand{\Feasible}[1]{\FeasibleOp(#1)}
\newcommand{\Lex}[1]{\LexOp(#1)}
\newcommand{\Proc}[1]{\ProcOp(#1)}
\newcommand{\ProcTwo}[2]{\ProcOp(#1,#2)}
\newcommand{\Gap}[2]{\GapOp(#1,#2)}
\newcommand{\Slack}[1]{\SlackOp(#1)}
\newcommand{\SlackTwo}[2]{\SlackOp(#1,#2)}
\newcommand{\Level}[2]{\LevelOp(#1,#2)}
\newcommand{\Matroid}{M}
\newcommand{\Greedy}[1]{\GreedyOp(#1)}
\newcommand{\GreedyTwo}[2]{\GreedyOp(#1,#2)}
\newcommand{\SameDist}{\simeq}
\newcommand{\Rank}[1]{\RankOp(#1)}
\newcommand{\Ids}[1]{\IdsOp(#1)}
\newcommand{\TreeAdd}[3]{#1.\AddOp(#2,#3)}
\newcommand{\TreeAdmits}[3]{#1.\AdmitsOp(#2,#3)}
\newcommand{\TreeSlack}[1]{#1.\SlackOp()}

\newcommand{\True}{\mathbf{true}}
\newcommand{\False}{\mathbf{false}}

\newcommand{\Nil}{\mathbf{nil}}
\newcommand{\Keys}[1]{\KeysOp(#1)}

\newcommand{\Ind}{\mathcal I}
\newcommand{\Bases}{\mathcal B}

\title{Moore's Greedy Algorithm for Minimizing the \\Number of Late Jobs:
 Structure and Implementation}
\author{
  Dean Matthew Menezes\\
  University of Texas at Austin\\
  \texttt{dean.menezes@utexas.edu}
  \and
  C. Gregory Plaxton\\
  University of Texas at Austin\\
  \texttt{plaxton@cs.utexas.edu}
}
\date{}

\begin{document}
\maketitle

\begin{abstract}
The problem of scheduling $n$ jobs, each with a deadline and a
processing time, on a single resource to minimize the number
of late jobs is one of the earliest problems considered in the
scheduling literature. It is equivalent to finding a
maximum-cardinality feasible subset of the jobs, where a subset is
feasible if there is a schedule under which all of the associated jobs
meet their deadlines.  Moore's 1968 paper presents two algorithms for
the problem, including an elegant non-greedy $O(n\log n)$-time algorithm
attributed to Hodgson. Moore also presents a
natural greedy algorithm that considers the jobs in nondecreasing
order of processing time, adding each successive job to the current
feasible set whenever the resulting set remains feasible.  Zhao and
Yuan recently obtained the first subquadratic implementation of
Moore's greedy algorithm: They provided a data structure that performs
each feasibility query in amortized $O(\log n)$ time, thereby matching
the overall $O(n\log n)$ time bound of Moore-Hodgson. Our first main
contribution is to provide a simpler augmented BST data structure that
performs each feasibility query in worst-case $O(\log n)$ time.

Moore's greedy algorithm is known to produce a maximum-cardinality
feasible subset with minimum total processing time. The set system
associated with the set of all feasible subsets is not a matroid, but
if it were, Moore's greedy algorithm would be a special case of the
matroid greedy algorithm, yielding the latter property as an immediate
consequence.  In our second main contribution, we explain why the
matroid greedy algorithm works correctly in this non-matroid setting:
We show that if all of the jobs have equal processing time, then the
problem corresponds to a nested matroid, and that when there are
distinct ``tiers'' of processing times, the matroid greedy algorithm
solves a sequence of nested matroid problems, one for each tier.

In our third main contribution, we present an explicit linear-size
flow network that defines a polymatroid rank. A set is
feasible exactly when its rank equals its total processing time, and
Moore accepts a job exactly when it contributes its full
processing-time marginal. Contracting the shorter accepted jobs and
measuring residual rank in units of the current processing time
recovers the complete rank function of the corresponding tier
matroid. As such, the flow rank provides another structural
explanation for the correctness of Moore's greedy algorithm.
\end{abstract}

\section{Introduction}
\label{sec:intro}

This paper is concerned with the following classic scheduling problem,
often denoted $\Prob$ in the scheduling literature: Given a set of jobs
to be scheduled on a single resource beginning at time zero, where each
job is characterized by a positive real deadline and a positive real
processing time, determine an order in which to execute the jobs that
minimizes the number of late jobs.  We say that a set of jobs is
feasible if there is an order of execution for the jobs such that every
job meets its deadline.  Given a set of jobs $S$, the problem of finding
a schedule for $S$ that minimizes the number of late jobs is equivalent
to the problem of finding a maximum-cardinality feasible subset of $S$.
Jackson's rule implies that a set of jobs is feasible if and only if
every job meets its deadline under any earliest-deadline-first (EDF)
schedule; see, for example, Pinedo~\cite{Pinedo22}.

Moore~\cite{Moore68} presents two algorithms for computing a
maximum-cardinality feasible subset $X$ of a given set of jobs $S$.
The first uses a simple greedy approach to grow $X$ from the empty set
by considering each job $\alpha$ in $S$ in nondecreasing order of its
processing time, breaking ties arbitrarily, and adding $\alpha$ to $X$
whenever the resulting set remains feasible.\footnote{Moore's
description of the algorithm is somewhat more involved.  At a high
level, when considering job $\alpha$, Moore first checks a
computationally easy, constant-time condition that is sufficient but
not necessary for $X+\alpha$ to be feasible.  If this condition is not
met, then a more expensive, linear-time computation is performed to
determine whether $X+\alpha$ is feasible.  The overall effect is that
$\alpha$ is added to $X$ if and only if $X+\alpha$ is feasible.  The
simplified description of Moore's greedy algorithm used in our paper
is explicitly stated by Lin and Wang~\cite{LinWang07}.}

Moore's paper includes a brief supplement describing a second
algorithm attributed to T.~J.~Hodgson; in the present paper, we refer
to the latter algorithm as the Moore--Hodgson algorithm.  We maintain
an EDF schedule of an initially empty set $X$ of tentatively accepted
jobs and process all jobs in deadline order, breaking ties
arbitrarily.  To process a job $\alpha$, we append it to the EDF
schedule for $X$.  If the resulting set is infeasible, we delete a job
of maximum processing time from $X+\alpha$.  The Moore-Hodgson
algorithm admits a simple $O(n\log n)$-time implementation, a
significant improvement over the $O(n^2)$-time bound established by
Moore for his first algorithm~\cite{Moore68,Pinedo22}.

The Moore-Hodgson algorithm is not an insertion-only greedy
algorithm: a job tentatively accepted into $X$ can be removed by a
later exchange operation.  Although the algorithm is simple to
describe, its correctness rests on a careful exchange
argument.\footnote{Moore stated the supplementary algorithm without
giving a separate proof.  Subsequent proofs and treatments include
Maxwell~\cite{Maxwell70}, French~\cite{French82}, van den Akker and
Hoogeveen~\cite{AkkerHoogeveen04}, Cheriyan, Ravi, and
Skutella~\cite{CheriyanRaviSkutella21}, and
Pinedo~\cite{Pinedo22}.}

More recently, Zhao and Yuan~\cite{ZhaoYuan19} presented an entirely
different $O(n\log n)$-time algorithm for $\Prob$.  Their algorithm
builds a preemptive schedule in reverse, working backward from the
largest deadline.  In this setting the deadlines act like release
times, and the algorithm schedules an available job of minimum
positive remaining processing time.  Since there are only linearly
many job-release and job-completion events, a heap gives an
$O(n\log n)$-time implementation.

An interesting aspect of Moore's algorithm is its close relationship
to the matroid greedy algorithm.  Consider the set system
$(S,\mathcal F)$, where $\mathcal F$ is the family of feasible subsets
of $S$.  This set system is not a matroid in general, but if it were,
then Moore's algorithm would be the standard matroid greedy algorithm
for computing a minimum-weight base, where the weight of a job is its
processing time~\cite{Edmonds71,Oxley11}.  Thus, if
$(S,\mathcal F)$ were a matroid, we could immediately conclude that
Moore's algorithm computes a maximum-cardinality feasible set with
minimum total processing time.  Interestingly, Lin and Wang prove
that Moore's algorithm has precisely this property~\cite{LinWang07}.
Put differently, the matroid greedy algorithm ``works'' for
$(S,\mathcal F)$ even though this set system is not a matroid.

Given this state of affairs, two natural questions arise.  First, what
matroid-like structural properties of $(S,\mathcal F)$ allow the
matroid greedy algorithm to work correctly?  Second, is there a data
structure that can perform each feasibility query in Moore's algorithm
in $O(\log n)$ time?  Zhao and Yuan recently answered the latter
question in the affirmative in an amortized sense~\cite{ZhaoYuan21}.
Their data structure yields yet another $O(n\log n)$-time method for
solving $\Prob$.

{\bf Our contributions.} The present paper focuses on improving our
understanding of Moore's algorithm, which we refer to as ``the greedy           
algorithm'' throughout the remainder of the paper.  We make three main
contributions. First, we provide
a simple augmented BST data structure for performing each of the
feasibility queries associated with the greedy algorithm in
(worst-case) $O(\log n)$ time. The details of this data structure are
presented in Section~\ref{sec:augment}. Section~\ref{sec:zhao}
provides a high-level overview of the Zhao-Yuan data structure so that
the reader can better appreciate the relative simplicity of our
augmented BST approach.

Our second main contribution is to identify the matroid-like
structural properties of $(S,\mathcal F)$ that allow the matroid greedy
algorithm to work correctly.  We begin by arguing that if all jobs in
$S$ have the same processing time, then $(S,\mathcal F)$ is a nested
matroid---a special case of a laminar matroid in which the constraining
sets form a chain~\cite{FifeOxley17}.  Now consider the case in which
all jobs in $S$ have one of two processing times, say $p_1$ and $p_2$,
where $p_1<p_2$.  For each $i$ in $\{1,2\}$, let $S_i$ denote the set
of jobs with processing time $p_i$.  If $X$ is a
maximum-cardinality feasible subset of $S$ with minimum total
processing time, then $X\cap S_1$ is a base $B$ of the nested matroid
on $S_1$ whose independent sets are the feasible subsets of $S_1$,
and $X\cap S_2$ is a base of the nested matroid on $S_2$ whose
independent sets are the subsets $Y$ of $S_2$ such that $B\cup Y$ is
feasible.  Crucially, we prove that the latter matroid is independent
of the choice of $B$.  Thus the matroid greedy algorithm works
correctly because it first processes the jobs in $S_1$, yielding a
base of the first nested matroid, and then processes the jobs in
$S_2$, yielding a base of the second.

More generally, the feasibility structure is matroidal in a tier-by-tier
sense.  Regardless of the particular set of jobs with processing time
less than $p$ selected by the greedy algorithm, the ``addable'' (that
is, feasible to add) subsets of processing-time-$p$ jobs form the
independent sets of the same nested matroid (Lemma~\ref{lem:tierOrthog} and Theorem~\ref{thm:level}).  The problem
thus decomposes into one matroid per distinct processing time, solved
in increasing order of processing time.  Moore's scan is therefore an
ordered sequence of ordinary matroid greedy phases.  Equivalently, the
tier matroids combine into a direct-sum matroid whose bases are exactly
the complete greedy outputs.  This explains why the matroid greedy
algorithm works correctly even though $(S,\mathcal F)$ is not a
matroid.

Our third main contribution is to show that the same feasibility
structure corresponds globally to a polymatroid.  An explicit
linear-size flow network measures how much processing offered by a set
of jobs can be completed before the relevant deadlines.  Its
maximum-flow value is normalized, monotone, and submodular, and hence
is a polymatroid rank function~\cite{Edmonds70,Fujishige05}.  The
scheduling interpretation is exact: a set of jobs is feasible precisely
when all of its offered processing can be accommodated, and Moore
accepts a candidate precisely when it contributes its full processing
requirement as a marginal increase in the rank.  The usual polymatroid
greedy algorithm may allocate only part of a job, but the connection
to discrete greedy becomes exact tier by tier.  After contracting the
shorter accepted jobs and measuring residual rank in units of the
current processing time, the polymatroid recovers the complete rank
function of the corresponding nested tier matroid.  Thus the flow view
explains both Moore's all-or-nothing greedy rule and the matroid greedy
phases identified above.

In addition to the three main contributions, we strengthen some
previously known structural properties of the greedy algorithm.  For
example, over all shortest-job-first tie orders, the first $i$
accepted jobs are exactly the feasible $i$-sets with minimum work (i.e., total
processing time).\footnote{Zhao and Yuan proved that one fixed
shortest-job-first order, with ties broken by EDF, yields a nested
sequence of minimum processing time solutions, one at each attainable
cardinality~\cite{ZhaoYuan21}.  We strengthen this fixed-order result
by ranging over all shortest-job-first tie orders and proving the
converse realization: every minimum-work feasible set of an attainable
cardinality occurs as the corresponding acceptance prefix.}
Conversely, every minimum-work feasible $i$-set is realized by a
suitable tie order, and all such sets have the same processing-time
distribution.  Lin and Wang characterize the final critical sets by a
lexicographically least processing-time sequence~\cite{LinWang07}; our
result applies to every tie order and includes the converse
realization.  Ordering jobs within each processing-time tier by a
secondary weight also yields an optimum for that secondary objective
among the maximum-cardinality, minimum-work feasible sets.

The remainder of the paper is organized as follows.
Section~\ref{sec:prelims} provides formal definitions and a few basic
lemmas concerning feasibility. Section~\ref{sec:fast} presents our
augmented BST data structure, together with an overview of the
Zhao-Yuan data structure~\cite{ZhaoYuan21}.
Section~\ref{sec:props} develops several structural properties of
the greedy algorithm, including an exact characterization of the
job-sets arising as prefixes of its acceptance sequence.
Section~\ref{sec:matroid} gives our matroid interpretation of the
greedy algorithm. Section~\ref{sec:flow} develops the flow formulation
and its polymatroid consequences. Section~\ref{sec:concs} provides some
concluding remarks.

\section{Preliminaries}
\label{sec:prelims}

A job $\alpha$ has a positive real deadline $\alpha.d$, a positive
real processing time $\alpha.p$, and an integer identifier
$\alpha.i$.  A job-set $S$ is a finite set of jobs with distinct
identifiers.  For any job-set $S$, we define $\Ids{S}$ as
$\{\alpha.i\mid\alpha\in S\}$ and we define $\Proc{S}$ as
$\sum_{\alpha\in S}\alpha.p$.

We are interested in nonpreëmptive scheduling of a job-set on a
single resource starting at time zero.  A schedule for a job-set $S$
specifies the order in which to execute the jobs in $S$.  We say that
a schedule is earliest deadline first (EDF) if it lists the jobs in
nondecreasing order of deadline.  Similarly we say that a schedule is
shortest job first (SJF) if it lists the jobs in nondecreasing order
of processing time.  For any job-set $S$, we define
$\Edf{S}$ (resp., $\Sjf{S}$) as the set of all EDF (resp., SJF)
schedules of $S$.

A schedule for a job-set $S$ is said to be feasible if the completion
time of each job $\alpha$ in $S$ is at most $\alpha.d$.  A job-set
$S$ is said to be feasible if it admits a feasible schedule.  For any job-set $S$, we define $\Feasible{S}$ as the set of all feasible subsets of $S$.
For any job-set $S$ and any positive real $d$, we define
$\ProcTwo{S}{d}$ as $\sum_{\alpha\in S:\alpha.d\leq d}\alpha.p$.

We study the greedy algorithm that takes as input a job-set $S$, sorts
the jobs in $S$ in nondecreasing order of processing time to obtain a
sequence $\Seq$ in $\Sjf{S}$, and then computes a feasible subset $X$
of $S$ as follows: $X$ is initialized to the empty set; each
successive job $\alpha$ in $\Seq$ is processed by adding $\alpha$ to
$X$ if $X+\alpha$ is feasible. Section~\ref{sec:props} presents
several basic properties of the greedy algorithm, including the fact
that it computes a maximum-cardinality feasible subset of $S$.

For any sequence $\Seq$ of distinct jobs, let $\Greedy{\Seq}$ denote
the set returned by applying this insertion rule to $\Seq$. For any
job-set $S$, define
\[
\Greedy{S}
=
\{\Greedy{\Seq}\mid\Seq\in\Sjf{S}\}.
\]

We now present several useful definitions and lemmas for
characterizing feasible sets.  For any job-set $S$ and any positive
real $d$, we define $\Gap{S}{d}$ as $d-\ProcTwo{S}{d}$.

\begin{lemma}
\label{lem:feasibleGap}
Let $S$ be a job-set.  If $\Gap{S}{\alpha.d}\geq 0$ for all $\alpha$
in $S$, then $S$ is feasible, and if $S$ is feasible, then
$\Gap{S}{d}\geq 0$ for all positive reals $d$.
\end{lemma}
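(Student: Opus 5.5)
We prove the two implications separately; both are direct calculations with an EDF ordering of $S$.

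\emph{First implication.} Suppose $\Gap{S}{\alpha.d}\geq 0$ for every $\alpha$ in $S$. Fix any schedule $\sigma$ in $\Edf{S}$ and any job $\alpha$ in $S$. Since $\sigma$ lists jobs in nondecreasing order of deadline, every job completed by the end of $\alpha$ has deadline at most $\alpha.d$. The resource runs without idle time from time zero, so the completion time of $\alpha$ under $\sigma$ is the total processing time of $\alpha$ and its predecessors. These jobs form a subset of $\{\beta\in S:\beta.d\leq\alpha.d\}$, so the completion time of $\alpha$ is at most $\ProcTwo{S}{\alpha.d}=\alpha.d-\Gap{S}{\alpha.d}\leq\alpha.d$. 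Hence every job meets its deadline, $\sigma$ is a feasible schedule, and $S$ is feasible.

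\emph{Second implication.} Suppose $S$ is feasible, and fix a feasible schedule $\sigma$ and a positive real $d$. Let $T=\{\beta\in S:\beta.d\leq d\}$. If $T$ is empty, then $\Gap{S}{d}=d>0$. Otherwise, let $\gamma$ be the job of $T$ that appears last in $\sigma$. Every job of $T$ runs no later than $\gamma$, and processing is sequential from time zero, so the completion time of $\gamma$ is at least $\Proc{T}=\ProcTwo{S}{d}$. Feasibility of $\sigma$ gives that this completion time is at most $\gamma.d\leq d$. Therefore $\ProcTwo{S}{d}\leq d$, that is, $\Gap{S}{d}\geq 0$.

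No step is a real obstacle. The only point needing care is in the first implication, when ties in deadlines occur. Bounding the jobs up to and including $\alpha$ by the full set $\{\beta\in S:\beta.d\leq\alpha.d\}$ handles this, since it includes all tied jobs regardless of their order in $\sigma$. This argument also shows that any EDF schedule of such an $S$ is feasible, which is consistent with Jackson's rule as cited in the introduction.
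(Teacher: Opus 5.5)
Your proposal is correct and follows the paper's proof. For the first implication, both you and the paper bound each EDF completion time by $\ProcTwo{S}{\alpha.d}$. For the second, both rest on the fact that the jobs due by $d$ need $\ProcTwo{S}{d}$ units of processing before time $d$; you argue this directly through the last such job in a feasible schedule, while the paper states it as a contrapositive.
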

\begin{proof}
Observe that in any schedule in $\Edf{S}$, each job $\alpha$
completes no later than $\ProcTwo{S}{\alpha.d}$; equality holds for
the last job among those with deadline $\alpha.d$.  It follows that
if $\Gap{S}{\alpha.d}\geq 0$ for all $\alpha$ in $S$, then any
schedule in $\Edf{S}$ is feasible for $S$.

If $\Gap{S}{d}<0$ for some positive real $d$, then the jobs due by
time~$d$ require more than $d$ units of processing, so $S$ is not
feasible.  Thus if $S$ is feasible, then $\Gap{S}{d}\geq 0$ for all
positive reals $d$, as claimed.
\end{proof}

For any job-set $S$, we define $\Slack{S}$ as
$\min_{\alpha\in S}\Gap{S}{\alpha.d}$, with the convention
$\min\emptyset=+\infty$.

\begin{lemma}
\label{lem:feasibleSlack}
Let $S$ be a job-set. Then $S$ is feasible if and only if
$\Slack{S}\geq 0$.
\end{lemma}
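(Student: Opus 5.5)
This follows almost immediately from Lemma~\ref{lem:feasibleGap}, so the plan is to unfold the definition of $\Slack{S}$ and check both directions, treating the empty job-set as a separate degenerate case.

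For the ``if'' direction, suppose $\Slack{S}\geq 0$. If $S$ is empty, it is trivially feasible, since the empty schedule has no job that can miss its deadline. Otherwise $\Slack{S}$ is the minimum over $\alpha\in S$ of $\Gap{S}{\alpha.d}$, so $\Gap{S}{\alpha.d}\geq 0$ for every $\alpha$ in $S$. The first part of Lemma~\ref{lem:feasibleGap} then gives that $S$ is feasible.

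For the ``only if'' direction, suppose $S$ is feasible. If $S$ is empty, then $\Slack{S}=+\infty\geq 0$ by the stated convention. Otherwise, the second part of Lemma~\ref{lem:feasibleGap} gives $\Gap{S}{d}\geq 0$ for every positive real $d$. In particular this holds at each $d=\alpha.d$ with $\alpha\in S$; these values are admissible because deadlines are positive reals. Taking the minimum over the finitely many $\alpha\in S$ yields $\Slack{S}\geq 0$.

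There is no real obstacle here. The only points needing care are the empty-set convention $\min\emptyset=+\infty$ and the remark that each deadline is a positive real, so that Lemma~\ref{lem:feasibleGap} applies at $d=\alpha.d$.
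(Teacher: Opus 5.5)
Your proposal is correct and matches the paper, which simply declares the lemma immediate from Lemma~\ref{lem:feasibleGap}; you just spell out the two directions explicitly. The separate empty-set case in the ``if'' direction is harmless but unnecessary, since the first part of Lemma~\ref{lem:feasibleGap} already applies vacuously when $S$ is empty.
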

\begin{proof}
Immediate from Lemma~\ref{lem:feasibleGap}.
\end{proof}

For any job-set $S$ and any positive real $d$, we 
define $\SlackTwo{S}{d}$ as
\[
\min(\Gap{S}{d},\min_{\alpha\in S:d\leq\alpha.d}\Gap{S}{\alpha.d}),
\]
where, as above, the minimum of the empty set is $+\infty$.
Equivalently,
\[
\SlackTwo{S}{d}=\min_{t\geq d}\Gap{S}{t}.
\]
Consequently, $\SlackTwo{S}{d}$ is nondecreasing in $d$.

\begin{lemma}
\label{lem:feasibleToAdd}
Let $X$ be a feasible job-set and let $\alpha$ be a job such that
$\alpha.i$ does not belong to $\Ids{X}$. Then $X+\alpha$ is feasible
if and only if $\SlackTwo{X}{\alpha.d}\geq\alpha.p$.
\end{lemma}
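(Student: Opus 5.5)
The plan is to reduce everything to Lemma~\ref{lem:feasibleSlack}, via an exact formula for how the gaps change when $\alpha$ is added. Let $Y=X+\alpha$. For any positive real $t$, the definition of $\ProcTwo{\cdot}{\cdot}$ gives
\[
\Gap{Y}{t}=\begin{cases}\Gap{X}{t}-\alpha.p & \text{if } t\geq\alpha.d,\\ \Gap{X}{t} & \text{if } t<\alpha.d.\end{cases}
\]
This is where the hypothesis $\alpha.i\notin\Ids{X}$ is used: $\alpha$ is genuinely a new job, so $\Proc{Y}=\Proc{X}+\alpha.p$ and the processing time of $\alpha$ is counted exactly once.

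For the ``if'' direction, assume $\SlackTwo{X}{\alpha.d}\geq\alpha.p$. By Lemma~\ref{lem:feasibleSlack} it suffices to show $\Gap{Y}{\beta.d}\geq 0$ for every $\beta\in Y$, and I would split into two cases.
\begin{itemize}
\item If $\beta.d<\alpha.d$, then $\beta\in X$ and $\Gap{Y}{\beta.d}=\Gap{X}{\beta.d}\geq 0$. The inequality holds because $X$ is feasible, by Lemma~\ref{lem:feasibleGap}.
\item If $\beta.d\geq\alpha.d$, then $\Gap{Y}{\beta.d}=\Gap{X}{\beta.d}-\alpha.p$. Next, $\Gap{X}{\beta.d}\geq\SlackTwo{X}{\alpha.d}$ by the characterization $\SlackTwo{X}{d}=\min_{t\geq d}\Gap{X}{t}$. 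Combining these gives $\Gap{Y}{\beta.d}\geq\SlackTwo{X}{\alpha.d}-\alpha.p\geq 0$. This case includes $\beta=\alpha$ itself, where we take $t=\alpha.d$.
\end{itemize}

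For the ``only if'' direction, assume $Y$ is feasible. By Lemma~\ref{lem:feasibleGap}, $\Gap{Y}{t}\geq 0$ for all positive reals $t$. For each $t\geq\alpha.d$ this gives $\Gap{X}{t}\geq\alpha.p$. Taking the minimum over $t\geq\alpha.d$ yields $\SlackTwo{X}{\alpha.d}\geq\alpha.p$.

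There is no serious obstacle in this argument. The one point needing care is that the equivalence $\SlackTwo{X}{d}=\min_{t\geq d}\Gap{X}{t}$ was asserted in the text rather than proved. If needed, I would justify it directly from the definition. Between consecutive deadlines of $X$ (and beyond the last deadline), $\Gap{X}{t}$ is increasing in $t$, because $\ProcTwo{X}{t}$ is constant there while $t$ grows. Hence the infimum over $t\geq d$ is attained either at $t=d$ or at some deadline $\beta.d\geq d$ with $\beta\in X$, which is exactly the two-term minimum in the definition. Alternatively, I could avoid the equivalence altogether and use only the explicit definition of $\SlackTwo{X}{d}$. In the ``only if'' direction it is enough to evaluate $\Gap{Y}{\cdot}$ at $t=\alpha.d$ and at deadlines $\beta.d\geq\alpha.d$ of jobs $\beta\in X$. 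In the ``if'' direction only those same points are needed, so the equivalence is not required there either.
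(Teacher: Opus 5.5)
Your proof is correct and is essentially the paper's argument: you split the jobs of $X+\alpha$ by whether their deadline is below $\alpha.d$, observe that gaps at deadlines at least $\alpha.d$ drop by exactly $\alpha.p$ while the others are unchanged, and apply Lemma~\ref{lem:feasibleGap} in both directions. The paper's ``only if'' direction evaluates gaps only at $\alpha.d$ and at deadlines of jobs of $X$ that are at least $\alpha.d$, which is exactly your fallback that avoids the $\min_{t\geq d}$ characterization.
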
\begin{proof}
Let $Y$ denote $\{\beta\in X+\alpha\mid\beta.d<\alpha.d\}$ and let $Z$
denote $(X+\alpha)\setminus Y$.  For any job $\beta$ in $Y$, we have
$\Gap{X+\alpha}{\beta.d}=\Gap{X}{\beta.d}\geq 0$ where the inequality
follows from Lemma~\ref{lem:feasibleGap} and the feasibility of $X$.
For any job $\beta$ in $Z$, we have
$\Gap{X+\alpha}{\beta.d}=\Gap{X}{\beta.d}-\alpha.p$.

For the if direction, assume
$\SlackTwo{X}{\alpha.d}\geq\alpha.p$. Then for all jobs $\beta$ in
$Z$, we have $\Gap{X}{\beta.d}\geq\alpha.p$ and hence
$\Gap{X+\alpha}{\beta.d}\geq 0$. Since $\Gap{X+\alpha}{\beta.d}\geq 0$
for all jobs $\beta$ in $Y$, we conclude that $X+\alpha$ is feasible
by Lemma~\ref{lem:feasibleGap}.

For the only if direction, assume $X+\alpha$ is
feasible. Lemma~\ref{lem:feasibleGap} implies
$\Gap{X+\alpha}{\beta.d}\geq 0$ for all $\beta$ in $Z$. Hence
$\Gap{X}{\beta.d}\geq\alpha.p$ for all $\beta$ in $Z$, implying that
$\SlackTwo{X}{\alpha.d}\geq\alpha.p$, as required.
\end{proof}

For any job-set $S$, we define $\MyBig{S}$ as the set
$\ArgMax_{X\in\Feasible{S}}|X|$ of maximum-cardinality feasible
subsets, we define $\Rank{S}$ as the cardinality of any job-set in
$\MyBig{S}$, and we define $\Short{S}$ as the set
$\ArgMin_{X\in\MyBig{S}}\Proc{X}$.  For any job-set $S$ and any
integer $i$ in $\{0,\ldots,\Rank{S}\}$, we define $\ShortTwo{S}{i}$ as
the set
\[
\ArgMin_{X\in\Feasible{S}:|X|=i}\Proc{X}.
\]
For any job-set $S$ and any positive real $p$, we define
$\AtMost{S}{p}$ as $\{\alpha\in S\mid \alpha.p\leq p\}$ and we define
the expressions $\EqualTo{S}{p}$ and $\LessThan{S}{p}$ analogously.
For any two job-sets $X$ and $Y$, we define the expression
$X\SameDist Y$ to mean that $X$ and $Y$ have the same distribution of
processing times (i.e., $|\EqualTo{X}{p}|=|\EqualTo{Y}{p}|$ for all
positive reals $p$).

\section{Fast Implementation}
\label{sec:fast}
In this section, we develop an efficient direct implementation of the
greedy algorithm.  As in the recent work of Zhao and
Yuan~\cite{ZhaoYuan21}, we seek a fast data structure for performing
each of the $n$ feasibility queries associated with the greedy
algorithm.  In Section~\ref{sec:augment}, we present a simple
augmented BST data structure that processes each feasibility query in
$O(\log n)$ time, thereby yielding an $O(n\log n)$-time implementation
of the greedy algorithm. The only previous data structure for
performing these feasibility queries in sublinear time is that of Zhao
and Yuan~\cite{ZhaoYuan21}, which runs in amortized $O(\log n)$ time,
and hence also yields an $O(n\log n)$-time implementation of the
greedy algorithm.  To allow the reader to appreciate the relative
simplicity of our augmented BST data structure (in terms of both ease
of description and ease of establishing correctness),
Section~\ref{sec:zhao} provides a high-level overview of the main
ideas underlying the Zhao-Yuan data structure.

\subsection{An Augmented BST Data Structure}
\label{sec:augment}

In this section we describe an augmented BST data structure $T$ that
we maintain alongside the current feasible job-set $X$ in our
efficient implementation of the greedy algorithm.

Tree $T$ supports two operations $\TreeAdd{T}{d}{p}$ and
$\TreeSlack{T}$ that run in logarithmic time and constant time,
respectively.\footnote{See Cormen et al.~\cite[Ch.~14]{CLRS09} for an
introduction to augmented binary search trees. As in this
introduction, we can assume that the underlying BST is a red-black
tree.} Before defining these two operations, we describe some basic
characteristics of the tree $T$ along with the invariants relating the
state of $T$ with that of $X$.

Each node $u$ of $T$ has two primary fields $u.d$ and $u.p$ augmented
by two auxiliary fields $u.x$ and $u.y$. The $d$-values are all
distinct and serve as the keys of BST $T$: an inorder traversal of $T$
visits the nodes in increasing order of $d$-value. For any subtree
$T'$ of $T$, we define $\Keys{T'}$ as $\{u.d\mid u\in T'\}$.  For any
node $u$ in $T$, we define $T_u$ as the subtree of $T$ rooted at node
$u$, $u.\ell$ (resp., $u.r$) as the left (resp., right) child of $u$
(or $\Nil$ if there is no such child), and $X_u$ as $\{\alpha\in
X\mid\alpha.d\in\Keys{T_u}\}$. We define the expressions $\Nil.x$ and
$\Nil.y$ as $0$ and $+\infty$, respectively. We maintain the following
invariants relating $X$ and $T$: $\Keys{T}=\{\alpha.d\mid\alpha\in
X\}$; for any node $u$ in $T$, $u.p=\sum_{\alpha\in
  X:\alpha.d=u.d}\alpha.p$, $u.x=\Proc{X_u}$, and $u.y=\Slack{X_u}$.

The operation $\TreeSlack{T}$ returns the $y$-value of the root of
$T$, and hence runs in constant time.  The operation
$\TreeAdd{T}{d}{p}$ takes a positive real $d$ and a nonzero real $p$
and modifies the primary fields of tree $T$ as follows. If $T$
contains a node $u$ with $u.d=d$, the effect of this operation is to
add $p$ to $u.p$ and to remove node $u$ from $T$ if the resulting
value of $u.p$ is zero. Otherwise, the effect of the operation is to
introduce a node $u$ with $u.d=d$ and $u.p=p$. It is straightforward
to obtain an $O(\log |T|)$-time red-black tree implementation of the
$\TreeAdd{T}{d}{p}$ operation (e.g., by using a red-black tree find
possibly followed by a red-black tree insertion or deletion).
We consider only calls for which the resulting processing mass at key
$d$ is nonnegative; if $T$ contains no node with key $d$, we require
$p>0$.

We now establish that the auxiliary fields can be efficiently
maintained during a $\TreeAdd{T}{d}{p}$ operation. Following the
framework of Cormen et al.~\cite[Ch.~14]{CLRS09}, it is sufficient to
argue that the values of the auxiliary fields of a node $u$ can be
determined in constant time given the values of the primary fields of
$u$ and the values of all fields of the children of $u$. To see this,
observe that $u.x=u.p+u.\ell.x+u.r.x$ and
$u.y=\min(u.\ell.y,\min(u.d,u.r.y)-u.\ell.x-u.p)$.

When we want to check whether a given job $\alpha$ can be added to the
current feasible set $X$ to obtain a larger feasible set $X+\alpha$,
we invoke $\TreeAdd{T}{\alpha.d}{\alpha.p}$ to tentatively update $T$
to reflect the set $X+\alpha$. Based on our invariant and Lemma~\ref{lem:feasibleSlack},
we can determine whether $X+\alpha$ is feasible by checking whether
$\TreeSlack{T}$ returns a nonnegative value.  If not, we use a
$\TreeAdd{T}{\alpha.d}{-\alpha.p}$ operation to restore the invariants
with respect to the feasible set $X$ (as opposed to $X+\alpha$).

In Appendix~\ref{sec:variant}, we describe a variant of data structure
$T$ that is slightly more complicated to describe but has the
advantage that the tree is only modified when a job $\alpha$ is added
to the current feasible set $X$.

\subsection{Overview of the Zhao-Yuan Data Structure}
\label{sec:zhao}
The purpose of the present section is to highlight the main ideas
underlying the Zhao-Yuan data structure. Our overview is not intended
to be comprehensive; see~\cite{ZhaoYuan21} for full details. Our goal
is to allow the reader to appreciate the relative simplicity of the
augmented BST data structure presented in Section~\ref{sec:augment}.

At a high level, the Zhao-Yuan data structure is based on three main
ideas. The first idea is to exploit the fact that a job-set admits a
feasible nonpreëmptive schedule if and only if it admits a feasible
preëmptive schedule. In other words, the Zhao-Yuan data structure
determines whether $X+\alpha$ is feasible by determining whether there
is a feasible preëmptive schedule for $X+\alpha$.

The second idea is to observe that, in order to construct a feasible
preëmptive schedule for a given feasible job-set, we can schedule
each successive job $\alpha$ as late as possible; that is, we allocate
unused intervals working backwards from the deadline of $\alpha$ until
the processing time of $\alpha$ has been fulfilled.

The above two ideas are sufficient to provide the basis for a correct
data structure. We can store the maximal contiguous allocated
intervals associated with the current schedule in a BST ordered by the
interval start times. To check whether we can add a job $\alpha$ with
deadline $d$ and processing time $p$, we first search the BST to
identify the rightmost allocated interval with start time at most $d$.
We then move left through the interval records while maintaining a
running sum of the unused time available before $d$, until either
(1) the sum reaches $p$, so $\alpha$ can be added while preserving
feasibility, or (2) time zero is reached and the sum remains less than
$p$, so $\alpha$ cannot be added while preserving feasibility.

A single test may encounter many interval records. If each movement
between records is implemented by an ordered-dictionary operation, the
immediate bound is $O(k\log n)$, where $k$ is the number of records
encountered before one of the two terminating conditions is reached.
When condition~(1) is reached, however, inserting $\alpha$ as late as
possible transforms the portion of the schedule encountered by the
search. The restoration procedures consolidate that portion, so all but
a constant number of the records crossed by the search disappear as
separate records. Thus the work spent crossing those records can be
charged against the resulting decrease in the size of the
representation. This gives the desired
amortized logarithmic bound for successful tests. The same conclusion
does not yet follow for a test ending in condition~(2), because a
rejected job need not alter the representation. The third idea removes
this difficulty.

The third main idea is to observe that if a job $\alpha$ with deadline
$d$ and processing time $p$ cannot be added to the current feasible set,
then no job that is processed by the greedy algorithm after
$\alpha$ and that has deadline at most $d$ will ever be accepted.
Indeed, such a job has processing time at least $p$ and deadline at
most $d$; replacing it by $\alpha$ in any feasible schedule would make
$X+\alpha$ feasible, a contradiction. We may therefore modify the preëmptive schedule on the prefix relevant
to deadline $d$: rather than continuing to execute the jobs represented
there as late as possible, we schedule them from time zero as early as
possible, in EDF order. The affected interval records can then be
consolidated into a single initial occupied interval. Consequently, a
rejected test also pays for the records encountered by its search, and
the same amortized analysis applies to both outcomes.

The kind of schedule produced by these operations is called a
DD-schedule. An initial portion of the accepted work is packed
from time zero in EDF order; the remaining work is scheduled strictly
after this initial portion and as late as possible. Equivalently, the
first occupied interval begins at time zero, while each later occupied
interval is anchored at a deadline (that is, its right endpoint is a
deadline), with idle gaps separating successive occupied intervals.
The ordered family of maximal occupied intervals, and hence the idle
gaps determined by their endpoints, is referred to as the associated
DD-interval family. The implementation need not retain the full
assignment of job fragments to time intervals: it is enough to maintain
this interval family and the information needed to perform the backward
search and restore DD form.

The process of determining whether to accept a given candidate job
requires traversing a (potentially large) number of intervals, and either outcome
transforms the portion crossed by the search. Zhao and Yuan bound the
total number of ordered-dictionary operations over the complete scan by
a potential argument, obtaining $O(\log n)$ amortized time per
candidate and $O(n\log n)$ total time; see~\cite[Lem.~2.4 and
Theorem~2.1]{ZhaoYuan21}.

The Zhao-Yuan structure and our augmented BST implement the same
accept-if-feasible rule and lead to the same overall asymptotic running
time. The difference lies in the representation and analysis. The
Zhao-Yuan structure maintains a preemptive schedule through occupied
intervals and idle gaps, and its time bound is amortized over the
resulting interval transformations. Our tree data structure $T$ does not
maintain a schedule: for each accepted deadline $d$, $T$ contains a node
$u$ with key $u.d=d$ and associated processing mass $u.p$, as well as two
auxiliary fields $u.x$ and $u.y$ determined by subtree $T_u$. Moreover, the
variant described in Appendix~\ref{sec:variant} only modifies $T$ when a job
is accepted.

\section{Properties of the Greedy Algorithm}
\label{sec:props}

We begin with an exchange observation. It gives a short proof of the
known minimum-work property of Moore's greedy algorithm and also
characterizes every prefix of the jobs accepted by the algorithm.

\medskip
\noindent\textbf{Exchange observation.}
Let $S$ be a job-set, let $\Seq$ belong to $\Sjf{S}$, let $\Seq'$
denote a prefix of $\Seq$, and put $X=\Greedy{\Seq'}$. Let $Y$ belong
to $\ShortTwo{S}{i}$ for some $i\geq |X|$. If $X\nsubseteq Y$, then
there exist jobs $\alpha^*$ in $X\setminus Y$ and $\beta^*$ in
$Y\setminus X$ such that $\alpha^*.p=\beta^*.p$ and
$Y+\alpha^*-\beta^*$ belongs to $\ShortTwo{S}{i}$.

\begin{proof}
Let $\alpha^*$ be the first job in $\Seq'$ that belongs to $X\oplus
Y$. Every earlier job belongs either to both $X$ and $Y$ or to neither.
If $\alpha^*$ belonged to $Y\setminus X$, then the set held by the
greedy algorithm when $\alpha^*$ was considered would be a subset of
the feasible set $Y$. Thus $\alpha^*$ would have been accepted, a
contradiction. Hence $\alpha^*$ belongs to $X\setminus Y$. Since $|Y|\geq|X|$, the set
$Y\setminus X$ is nonempty; let $\beta^*$ be one of its
minimum-deadline jobs. Since $\beta^*$ belongs to $Y\setminus X$, the choice of $\alpha^*$
implies that $\alpha^*$ precedes $\beta^*$ in $\Seq$,
and hence $\alpha^*.p\leq\beta^*.p$.

Put $Y'=Y+\alpha^*-\beta^*$. If a job $\alpha$ in $Y'$ satisfies
$\alpha.d\geq\beta^*.d$, then
\[
\ProcTwo{Y'}{\alpha.d}
\leq \ProcTwo{Y}{\alpha.d}-\beta^*.p+\alpha^*.p
\leq \alpha.d.
\]
If $\alpha.d<\beta^*.d$, then every job in $Y$ with deadline less than
$\beta^*.d$ belongs to $X$, by the choice of $\beta^*$; hence
\[
\ProcTwo{Y'}{\alpha.d}\leq\ProcTwo{X}{\alpha.d}\leq\alpha.d.
\]
Thus $Y'$ is feasible by Lemma~\ref{lem:feasibleGap}. 
Since $Y$ has minimum work among the feasible $i$-sets, we must have
$\Proc{Y'}=\Proc{Y}$. Hence $\alpha^*.p=\beta^*.p$, and $Y'$ belongs
to $\ShortTwo{S}{i}$.

\end{proof}

\begin{lemma}
\label{lem:greedyIsShort}
Let $S$ be a job-set. Then $\Greedy{S}$ is contained in $\Short{S}$.
\end{lemma}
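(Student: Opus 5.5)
We will use the exchange observation with the full sequence $\Seq' = \Seq$. Fix $\Seq$ in $\Sjf{S}$ and put $X=\Greedy{\Seq}$. Let $r=\Rank{S}$. Since $\Short{S}=\ShortTwo{S}{r}$ is nonempty, choose $Y$ in $\ShortTwo{S}{r}$ with $|X\cap Y|$ maximum. The aim is to show $X\subseteq Y$ and $|X|=r$, which gives $X=Y$ and hence $X\in\Short{S}$.

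First we show $X\subseteq Y$. Suppose not. Because $r\geq|X|$ (as $X$ is feasible), the exchange observation applies with $i=r$. It yields $\alpha^*\in X\setminus Y$ and $\beta^*\in Y\setminus X$ such that $Y'=Y+\alpha^*-\beta^*$ belongs to $\ShortTwo{S}{r}=\Short{S}$. But $|X\cap Y'|=|X\cap Y|+1$, which contradicts the choice of $Y$. Hence $X\subseteq Y$.

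Next we show $X=Y$. Suppose some $\beta\in Y\setminus X$ exists. When the greedy algorithm considered $\beta$, its current set was a subset of $X$, and hence a subset of $Y$. So the current set plus $\beta$ is a subset of the feasible set $Y$. Feasibility is closed under taking subsets, since by Lemma~\ref{lem:feasibleGap} removing jobs only increases gaps. Therefore $\beta$ would have been accepted, a contradiction. Thus $X=Y\in\Short{S}$.

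The only delicate point is this last step. It relies on the observation, already used in the exchange proof, that the greedy set held at any time is contained in the final $X$, together with downward closure of feasibility. The rest is immediate from the exchange observation, so the main work was done there; the maximality-of-intersection choice of $Y$ is what turns the single exchange step into a full argument.
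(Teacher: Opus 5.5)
Your proof is correct and follows the paper's route. The paper likewise applies the exchange observation to obtain a member of $\Short{S}$ containing $X$, and then concludes equality from the inclusion-maximality of every greedy output in $\Feasible{S}$. The only difference is cosmetic: the paper applies the exchange observation repeatedly, whereas you choose $Y$ to maximize $|X\cap Y|$, which makes the termination of that iteration explicit.
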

\begin{proof}
Let $X$ belong to $\Greedy{S}$ and $Y$ to $\Short{S}$. Repeatedly
applying the exchange observation gives a job-set $Y'$ in $\Short{S}$
such that $X\subseteq Y'$ and $Y\SameDist Y'$. Every greedy output is
inclusion-maximal in $\Feasible{S}$: otherwise, the first job of a
proper feasible extension not already accepted would have been
accepted when it was considered. Hence $X=Y'$, so $X$ belongs to
$\Short{S}$.
\end{proof}

Lemma~\ref{lem:greedyIsShort} implies every job-set in $\Greedy{S}$
has cardinality $\Rank{S}$. For any job-set $S$, any sequence $\Seq$
in $\Sjf{S}$, and any $i$ in $\{0,\ldots,\Rank{S}\}$, we define
$\GreedyTwo{\Seq}{i}$ as the job-set consisting of the first $i$ jobs
accepted by the greedy algorithm on input $\Seq$. For any job-set $S$
and any such $i$, we define
\[
\GreedyTwo{S}{i}
=
\{\GreedyTwo{\Seq}{i}\mid\Seq\in\Sjf{S}\}.
\]
Thus $\GreedyTwo{S}{\Rank{S}}=\Greedy{S}$.

\begin{lemma}
\label{lem:prefix}
Let $S$ be a job-set and let $i$ belong to
$\{0,\ldots,\Rank{S}\}$. Then
\[
\GreedyTwo{S}{i}=\ShortTwo{S}{i}.
\]
Moreover, $X\SameDist Y$ for all job-sets $X$ and $Y$ in
$\GreedyTwo{S}{i}$.
\end{lemma}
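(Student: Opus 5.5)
We prove the two inclusions separately and then read off the distribution claim.

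\textbf{Inclusion $\GreedyTwo{S}{i}\subseteq\ShortTwo{S}{i}$.} Fix $\Seq$ in $\Sjf{S}$ and put $X=\GreedyTwo{\Seq}{i}$. Let $\Seq'$ be the shortest prefix of $\Seq$ whose greedy acceptance set has $i$ elements, so that $X=\Greedy{\Seq'}$. Pick any $Y$ in $\ShortTwo{S}{i}$; this set is nonempty because $i\leq\Rank{S}$. We apply the exchange observation with this prefix and with $|Y|=i=|X|$. Each application replaces $Y$ by another member of $\ShortTwo{S}{i}$ with the same processing-time distribution, and it strictly increases $|X\cap Y|$. After finitely many steps we reach a $Y'$ in $\ShortTwo{S}{i}$ with $X\subseteq Y'$ and $Y\SameDist Y'$. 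Since $|X|=|Y'|=i$, we get $X=Y'$. Hence $X$ lies in $\ShortTwo{S}{i}$, and moreover $X\SameDist Y$ for every $Y$ in $\ShortTwo{S}{i}$. This already gives the second claim: any two members of $\GreedyTwo{S}{i}$ share the distribution of a fixed $Y$, and therefore share each other's.

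\textbf{Inclusion $\ShortTwo{S}{i}\subseteq\GreedyTwo{S}{i}$.} This is the main obstacle, since we must construct a suitable tie order. Given $Y$ in $\ShortTwo{S}{i}$, choose $\Seq$ in $\Sjf{S}$ that, within each processing-time tier, lists the jobs of $Y$ before the other jobs of that tier. Let $X=\GreedyTwo{\Seq}{i}$ and let $\Seq'$ be the prefix of $\Seq$ ending at the $i$th acceptance. We claim $X=Y$. Suppose not, so $X\nsubseteq Y$ because $|X|=|Y|$. Run the argument inside the exchange observation. Its first job $\alpha^*$ of $\Seq'$ lying in $X\oplus Y$ must lie in $X\setminus Y$, and the observation provides $\beta^*$ in $Y\setminus X$ with $\alpha^*.p=\beta^*.p$ and with $\alpha^*$ preceding $\beta^*$ in $\Seq$. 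But $\alpha^*\notin Y$ and $\beta^*\in Y$ lie in the same tier, so our tie order puts $\beta^*$ before $\alpha^*$, a contradiction. Hence $X\subseteq Y$, and equal cardinalities give $X=Y$.

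The step to double-check is that the exchange observation's proof really yields ``$\alpha^*$ precedes $\beta^*$ in $\Seq$'' with $\beta^*$ anywhere in $Y\setminus X$. It does: $\alpha^*$ is the first job of $\Seq'$ in $X\oplus Y$, and every job of $Y\setminus X$ either lies outside $\Seq'$ or appears in $\Seq'$ after $\alpha^*$. Finally, combining the two inclusions gives equality. The distribution statement then also follows directly from the first inclusion, applied to any two members of $\GreedyTwo{S}{i}$.
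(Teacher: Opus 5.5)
Your proof is correct. The first inclusion and the distribution claim follow the paper exactly: you iterate the exchange observation on the prefix that produces $X$ until $X\subseteq Y'$, then compare cardinalities. Your converse, however, takes a different route.

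The paper's converse first uses the forward direction to get $X\SameDist Y$. It then takes the least processing time $p$ at which $X$ and $Y$ differ and argues that the accepted set at the start of tier $p$ is $\LessThan{X}{p}=\LessThan{Y}{p}$. Finally it simulates the greedy run to show that every job of $\EqualTo{Y}{p}$ is accepted, which contradicts $\EqualTo{X}{p}\neq\EqualTo{Y}{p}$.

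You instead reopen the proof of the exchange observation. There, the job $\alpha^*\in X\setminus Y$ precedes every job of $Y\setminus X$ in $\Seq$, in particular the chosen $\beta^*$. The minimum-work step then forces $\alpha^*.p=\beta^*.p$. That is incompatible with a tie order that lists the $Y$-jobs first within each tier.

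Your version is shorter and independent of the forward inclusion. It yields $X\subseteq Y$, hence $X=Y$, without needing $X\SameDist Y$ or the least-differing-tier bookkeeping. The cost is that it relies on a fact established only inside the observation's proof, namely that $\alpha^*$ precedes $\beta^*$ in $\Seq$. You correctly re-verify this, but a clean write-up should add that precedence to the statement of the observation. The paper's converse uses the observation only as a black box, through the first paragraph, at the price of a direct tier-level simulation of the greedy run.
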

\begin{proof}
Let $X$ belong to $\GreedyTwo{S}{i}$ and $Y$ to $\ShortTwo{S}{i}$.
Repeatedly applying the exchange observation to the prefix producing
$X$ gives a job-set $Y'$ in $\ShortTwo{S}{i}$ such that $X\subseteq
Y'$ and $Y\SameDist Y'$. Since $|X|=|Y'|=i$, we have $X=Y'$. Thus
$X$ belongs to $\ShortTwo{S}{i}$ and $X\SameDist Y$.

Conversely, let $Y$ belong to $\ShortTwo{S}{i}$, and choose a sequence
$\Seq$ in $\Sjf{S}$ that breaks ties in favor of jobs in $Y$. Put
$X=\GreedyTwo{\Seq}{i}$. By the preceding paragraph, $X$ and $Y$ have
the same processing-time distribution. If $X\neq Y$, let $p$ be the
least processing time at which they differ. Since $X$ and $Y$ have the
same processing-time distribution, $\EqualTo{X}{p}$ and
$\EqualTo{Y}{p}$ have the same positive cardinality. Thus fewer than
$i$ jobs have been accepted before this tier, and the accepted job-set
at that point is $\LessThan{X}{p}=\LessThan{Y}{p}$. Every job in
$\EqualTo{Y}{p}$ is therefore accepted, since the job-set held when it
is considered is a subset of the feasible set $Y$. As these jobs are
considered before the remaining jobs of processing time $p$, this
contradicts $\EqualTo{X}{p}\neq\EqualTo{Y}{p}$. Hence $X=Y$, and so
$Y$ belongs to $\GreedyTwo{S}{i}$.
\end{proof}

One fixed SJF/EDF order was previously known to produce nested
minimum-work solutions~\cite{ZhaoYuan21}; Lin and Wang characterize the
final critical sets by a lexicographically least processing
sequence~\cite{LinWang07}. Lemma~\ref{lem:prefix} applies to every tie
order and also gives the converse realization.

For any job-set $S$, let $\Lex{S}$ be the set of all job-sets $X$ in
$\Feasible{S}$ such that
\[
|\AtMost{X}{p}|\geq|\AtMost{Y}{p}|
\]
for every $Y$ in $\Feasible{S}$ and every positive real $p$. As an
immediate corollary,
\[
\Greedy{S}=\Lex{S}.
\]
Indeed, if $X$ belongs to $\Greedy{S}$, then $\AtMost{X}{p}$ belongs
to $\Greedy{\AtMost{S}{p}}$ for every $p$, so
Lemma~\ref{lem:greedyIsShort} gives the defining inequalities for
$\Lex{S}$. Conversely, if $X$ belongs to $\Lex{S}$, then $X$ has
maximum cardinality and its sorted processing-time vector is
componentwise no larger than that of any maximum-cardinality feasible
job-set. Hence $X$ belongs to $\Short{S}$, and
Lemma~\ref{lem:prefix} gives $X\in\Greedy{S}$.

\subsection{Tier Orthogonality}
\label{sec:orthog}

For any job-set $S$ and any positive real $p$, define
\[
\Bases_{S,p}
=
\{\EqualTo{\Greedy{\Seq}}{p}\mid\Seq\in\Sjf{S}\}.
\]

\begin{lemma}
\label{lem:tierOrthog}
Let $S$ be a job-set. Then
\[
\Greedy{S}
=
\{X\subseteq S\mid
  \EqualTo{X}{p}\in\Bases_{S,p}\text{ for every positive real }p\}.
\]
\end{lemma}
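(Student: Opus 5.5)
The inclusion from left to right is immediate. If $X=\Greedy{\Seq}$ for some $\Seq$ in $\Sjf{S}$, then $\EqualTo{X}{p}$ lies in $\Bases_{S,p}$ by definition, for every $p$.

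For the reverse inclusion, take $X\subseteq S$ with $\EqualTo{X}{p}\in\Bases_{S,p}$ for every $p$. Let $p_1<\cdots<p_k$ be the distinct processing times in $S$. For each $j$ choose $\Seq_j\in\Sjf{S}$ with $\EqualTo{\Greedy{\Seq_j}}{p_j}=\EqualTo{X}{p_j}$. The plan is to prove, by induction on $j$, that $\AtMost{X}{p_j}\in\Greedy{\AtMost{S}{p_j}}$. Since greedy on an SJF order of $S$ restricted to the jobs of processing time at most $p$ is exactly greedy on the induced order of $\AtMost{S}{p}$, the case $j=k$ gives the theorem. The base case $j=0$ is the empty set.

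For the inductive step, let $A=\AtMost{X}{p_{j-1}}$, which lies in $\Greedy{\AtMost{S}{p_{j-1}}}$ by hypothesis. Let $A_j=\AtMost{\Greedy{\Seq_j}}{p_{j-1}}$, which also lies in $\Greedy{\AtMost{S}{p_{j-1}}}$. By Lemma~\ref{lem:greedyIsShort}, both sets belong to $\Short{\AtMost{S}{p_{j-1}}}$. Let $Z$ be the set of jobs of processing time $p_j$, and let $B=\EqualTo{X}{p_j}$.

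The key step is an independence claim. For any $A,A'$ in $\Short{\AtMost{S}{p_{j-1}}}$ (or just in $\Greedy{\AtMost{S}{p_{j-1}}}$) and any $Y\subseteq Z$, we want $A\cup Y$ feasible if and only if $A'\cup Y$ feasible. Granting this, build a sequence $\Seq$ in $\Sjf{S}$ as follows.
\begin{itemize}
\item Its prefix on $\AtMost{S}{p_{j-1}}$ is an order realizing $A$.
\item Next come the jobs of $Z$, in the order they appear in $\Seq_j$.
\item The remaining jobs follow in any SJF order.
\end{itemize}
On tier $j$, greedy starting from $A$ accepts a job $\alpha$ exactly when $A\cup Y+\alpha$ is feasible, where $Y$ is the set accepted so far in the tier. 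By the claim this holds exactly when $A_j\cup Y+\alpha$ is feasible. By induction along the tier, the accepted set is therefore $\EqualTo{\Greedy{\Seq_j}}{p_j}=B$, and so $\AtMost{X}{p_j}=A\cup B$ is a greedy output on $\AtMost{S}{p_j}$.

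The main obstacle is the independence claim. I would prove it by the exchange observation applied within $\AtMost{S}{p_{j-1}}$. Repeated exchanges transform $A'$ into $A$ through sets in $\Short{\AtMost{S}{p_{j-1}}}$, each step swapping jobs $\alpha^*\in A$ and $\beta^*\notin A$ of equal processing time. It then suffices to handle a single swap $A'\to A'+\alpha^*-\beta^*$.

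Feasibility of $A'\cup Y$ is governed by the gaps, via Lemma~\ref{lem:feasibleGap}. One route is to show that the function $d\mapsto\SlackTwo{A'}{d}$ restricted to the truncation $\min(\cdot,p_j)$ is invariant under the swap. Since tier-$p_j$ jobs each need $p_j$ units, feasibility of adding $Y$ should depend only on the gap profile capped at multiples relevant to $p_j$. This is not obviously true, so I would try a cleaner alternative.

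The alternative is to use Lemma~\ref{lem:prefix} on $S'=A'\cup A\cup Y$ restricted appropriately. If $A'\cup Y$ is feasible, then the subset $\AtMost{(A'\cup A\cup Y)}{p_j}$ has a feasible set containing $Y$ together with a minimum-work lower tier. Choose a minimum-work feasible set of cardinality $|A|+|Y|$ within $A\cup A'\cup Y$. By Lemma~\ref{lem:prefix}, such sets are greedy prefixes, so their lower-tier part is a greedy output of $A\cup A'$, which is contained in $\Short{}$ and has the same distribution as $A$. Running greedy with ties favoring $A$ on the lower tiers, then $Y$, and invoking the exchange observation gives feasibility of $A\cup Y$.

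Making this last transfer rigorous is where I expect the real work. In particular, one must ensure that the exchanged set retains all of $Y$, using the fact that the exchange observation only swaps equal-processing-time jobs and can be directed to touch lower tiers only.
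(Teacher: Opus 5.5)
Your framework is correct. The left-to-right inclusion, the induction over tiers, and the construction of $\Seq$ all work: a lower-tier order realizing $A$, followed by the tier-$p_j$ order copied from $\Seq_j$. This matches the paper, which also concatenates realizing tier orders. The gap is that the independence claim, which carries the entire content of the lemma, is never proved.

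\begin{itemize}
\item Your first route would not suffice even if the invariance held. Slack capped at $p_j$ decides only whether one more tier-$p_j$ job fits, whereas your claim concerns adding an arbitrary set $Y$ of them.
\item Your second route, as written, only produces a minimum-work set whose lower part has the same processing-time distribution as $A$. Equal distribution says nothing about whether $A\cup Y$ is feasible, and you acknowledge that this transfer is not yet rigorous.
\end{itemize}

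So the argument is incomplete exactly at its crux.

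The missing idea is the cardinality invariance in Lemma~\ref{lem:greedyIsShort}: every greedy run on an SJF order of a job-set $R$ returns exactly $\Rank{R}$ jobs. Take $R=\AtMost{S}{p_{j-1}}\cup Y$.
\begin{itemize}
\item If $A'\cup Y$ is feasible, the run that realizes $A'$ and then scans $Y$ accepts every job of $Y$. Hence $\Rank{R}=|A'|+|Y|$.
\item The run that realizes $A$ and then scans $Y$ must also return $\Rank{R}=|A|+|Y|$ jobs, since $|A|=|A'|$. So it accepts all of $Y$, and $A\cup Y$ is feasible.
\end{itemize}
The paper uses exactly this, in the form: a job $\alpha$ whose processing time is at least that of every job in $R$, appended to an SJF order of $R$, is accepted iff $\Rank{R+\alpha}=\Rank{R}+1$. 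That condition is order-independent, so the paper never needs a set-level claim.

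Your exchange route can also be completed. The step you omit is the hypothesis that makes the exchange observation applicable at all, namely $A'\cup Y\in\ShortTwo{R}{|A|+|Y|}$.
\begin{itemize}
\item This holds because any feasible $(|A|+|Y|)$-subset of $R$ consists of all of $Y$ plus a member of $\MyBig{\AtMost{S}{p_{j-1}}}$. The work of that member is at least $\Proc{A'}$, since $A'$ belongs to $\Short{\AtMost{S}{p_{j-1}}}$.
\item Now apply the observation with $X=A$, realized by a prefix of an SJF order of $R$. Each step swaps some $\alpha^*\in A$ for a $\beta^*$ with $\beta^*.p=\alpha^*.p<p_j$, so no job of $Y$ is ever removed.
\item The process therefore terminates at $A\cup Y$, which lies in $\ShortTwo{R}{|A|+|Y|}$ and is hence feasible.
\end{itemize}
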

\begin{proof}
We first note that if every job in a job-set $R$ has processing time at
most that of a job $\alpha$, then, for every $\Seq$ in $\Sjf{R}$,
\[
\alpha\in\Greedy{\Seq\alpha}
\quad\Longleftrightarrow\quad
\Rank{R+\alpha}=\Rank{R}+1,
\]
by Lemma~\ref{lem:greedyIsShort}. Thus the acceptance of $\alpha$ is
independent of the SJF order chosen for $R$. Applying this observation
successively shows that, for any fixed order of the jobs of processing
time $p$, the selected jobs of processing time $p$ are independent of
the SJF order chosen for the shorter jobs.

Now prescribe a member $B_p$ of $\Bases_{S,p}$ for every $p$. For each
$p$, choose from a greedy run realizing $B_p$ the induced order of
$\EqualTo{S}{p}$, and concatenate these tier orders in increasing
order of $p$. The preceding observation shows successively that the
greedy algorithm selects exactly $B_p$ from each tier. This proves the
nontrivial inclusion; the reverse inclusion follows directly from the
definition of $\Bases_{S,p}$.
\end{proof}

\section{Matroid Interpretation}
\label{sec:matroid}

For any set $U$ and any subset $\Ind$ of $2^U$, the pair $(U,\Ind)$ is
a matroid if the following three conditions are satisfied: (1) $\Ind$
is nonempty; (2) for any $X$ in $\Ind$ and any subset $Y$ of $X$, $Y$
belongs to $\Ind$ (the hereditary property); (3) for any $X$ and $Y$
in $\Ind$ such that $|X|<|Y|$, there exists $y$ in $Y\setminus X$ such
that $X+y$ belongs to $\Ind$ (the exchange property).

For any job-set $S$ and any positive real $p$, we define
$\Ind_{S,p}$ as
\[
\{X\subseteq\EqualTo{S}{p}\mid\exists Y\in\Bases_{S,p}:X\subseteq Y\}.
\]

\begin{lemma}
\label{lem:tierMatroid}
Let $S$ be a job-set and let $p$ be a positive real. Then
$(\EqualTo{S}{p},\Ind_{S,p})$ is a matroid.
\end{lemma}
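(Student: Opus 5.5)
Fix the shorter-job tier by choosing some $B\in\Greedy{\LessThan{S}{p}}$, and show that $\Ind_{S,p}$ is exactly
\[
\mathcal J_B=\{Z\subseteq\EqualTo{S}{p}\mid B\cup Z\text{ is feasible}\}.
\]
Once this identification is in place, the matroid axioms follow from a nested-matroid argument.

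\textbf{Step 1 (identification).} Let $\Seq$ be any order in $\Sjf{S}$ whose restriction to $\LessThan{S}{p}$ produces $B$. When the greedy algorithm reaches tier $p$, its current set is $B$. It then processes the tier-$p$ jobs in some order, accepting each job whose addition keeps the set feasible. The final tier-$p$ output, $\EqualTo{\Greedy{\Seq}}{p}$, is therefore a maximal member of $\mathcal J_B$ (maximal by the argument already used in Lemma~\ref{lem:greedyIsShort}). By the independence observation in the proof of Lemma~\ref{lem:tierOrthog}, $\Bases_{S,p}$ consists exactly of the sets obtained this way as the order of tier $p$ varies. Conversely, let $Z\in\mathcal J_B$. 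Order tier $p$ so that the jobs of $Z$ come first. Every job of $Z$ is then accepted, since the current set is always a subset of the feasible set $B\cup Z$. Hence $Z$ lies in some member of $\Bases_{S,p}$. Together with hereditariness of feasibility, this gives $\Ind_{S,p}=\mathcal J_B$.

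\textbf{Step 2 (nested structure).} For each deadline value $t$, let $c_t=\lfloor \SlackTwo{B}{t}/p\rfloor$. Applying Lemma~\ref{lem:feasibleGap} to $B\cup Z$, I expect to show that $B\cup Z$ is feasible if and only if, for every deadline $t$ of a job in $B\cup Z$,
\[
p\,\bigl|\{\beta\in Z:\beta.d\le t\}\bigr|\le \Gap{B}{t}\qquad\text{and}\qquad p\,\bigl|\{\beta\in Z:\beta.d\le t\}\bigr|\le \Gap{B}{t'}\ \text{ for all } t'\ge t.
\]
These conditions amount to upper bounds $|\{\beta\in Z:\beta.d\le t\}|\le k_t$ for suitable integers $k_t$. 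The constraint sets $\{\beta\in\EqualTo{S}{p}:\beta.d\le t\}$ form a chain, so $\mathcal J_B$ is the family of independent sets of a nested (laminar) matroid. In particular, the axioms can be verified directly. Nonemptiness holds because $\emptyset\in\mathcal J_B$. Hereditariness holds because every constraint is monotone in $Z$. For exchange, take $|X|<|Y|$ and choose $y\in Y\setminus X$ of \emph{maximum} deadline. Any chain constraint violated by $X+y$ is a prefix constraint containing $y$, which is tight for $X$. But that prefix contains at least as many elements of $Y$ as of $X$, and the choice of $y$ lets us argue a contradiction via the largest tight prefix. Alternatively, one can use the standard fact that capacity constraints on a chain define a matroid.

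\textbf{Main obstacle.} The hardest part is the reformulation in Step 2: showing that feasibility of $B\cup Z$ reduces \emph{exactly} to integer prefix-count bounds. The difficulty is that adding $Z$ introduces new deadlines at which the gap must also be checked. My first approach would be to write
\[
\Gap{B\cup Z}{t}=\Gap{B}{t}-p\,\bigl|\{\beta\in Z:\beta.d\le t\}\bigr|
\]
and to note that the binding constraint at any $t$ comes from the minimum of $\Gap{B}{\cdot}$ over $[t,\infty)$ restricted to relevant deadlines. This is precisely the $\SlackTwo{B}{\cdot}$ function, which makes $k_t=\lfloor\SlackTwo{B}{t}/p\rfloor$ nondecreasing in $t$. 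If this bookkeeping becomes messy, a fallback is to prove the exchange property directly. Starting from $X,Y\in\mathcal J_B$ with $|X|<|Y|$, compare the two bases of $\Bases_{S,p}$ produced by greedy orders that list $X$, respectively $Y$, first. Since both bases have the same cardinality by Lemma~\ref{lem:prefix}, the jobs from $Y$ that extend $X$ can then be found by applying the exchange observation inside tier $p$.
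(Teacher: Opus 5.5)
Your proof is correct, but it takes a different route from the paper's.

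The paper verifies the exchange axiom directly and never computes slacks. Given $X,Y\in\Ind_{S,p}$ with $|X|<|Y|$, it runs the greedy algorithm on $\LessThan{S}{p}\cup X\cup Y$, listing the jobs of $X$ first in tier $p$ and then those of $Y\setminus X$. The tier-$p$ output $Z$ then satisfies $X\subseteq Z\subseteq X\cup Y$. A second SJF run on the same job-set accepts all of $Y$. Cardinality invariance (Lemma~\ref{lem:greedyIsShort}, applied to $\LessThan{S}{p}\cup X\cup Y$ and to $\LessThan{S}{p}$) then forces $|Z|\geq|Y|>|X|$, which supplies the exchange element.

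You instead identify $\Ind_{S,p}$ with $\{Z\subseteq\EqualTo{S}{p}\mid B\cup Z\text{ is feasible}\}$. You then recast feasibility as chain capacities $\lfloor\SlackTwo{B}{t}/p\rfloor$. This is essentially Lemma~\ref{lem:addLevelThree} plus the proof of Theorem~\ref{thm:level}, which the paper proves only afterwards. Your route therefore yields the matroid property, its nested structure, and independence from the choice of $B$ all at once. The paper's route buys independence from the slack machinery: it uses only heredity of feasibility and invariance of the number of greedy acceptances, so the tier matroid exists before anything quantitative about deadlines is established.

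Two points to tighten in your Step 2.

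\begin{itemize}
\item Index the constraints by a fixed threshold set, such as all deadlines in $\EqualTo{S}{p}$, not by the deadlines of $B\cup Z$. This is legitimate because $p\,|\{\beta\in Z:\beta.d\leq t\}|\leq\SlackTwo{B}{t}$ is necessary at every $t$, since Lemma~\ref{lem:feasibleGap} covers all positive reals, and it is sufficient at the deadlines of $Z$.
\item Your exchange sketch needs no ``largest tight prefix''. With $y$ of maximum deadline in $Y\setminus X$, every prefix $P$ containing $y$ contains all of $Y\setminus X$. Hence $|Y\cap P|-|X\cap P|\geq|Y|-|X|>0$, so $P$ is never tight for $X$.
\end{itemize}
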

\begin{proof}
Since $\emptyset$ belongs to $\Ind_{S,p}$, the family $\Ind_{S,p}$ is
nonempty. The hereditary property follows immediately from its
definition. It remains to prove the exchange property.

Let $X$ and $Y$ belong to $\Ind_{S,p}$, where $|X|<|Y|$. By the
definition of $\Ind_{S,p}$, some member of $\Bases_{S,p}$ contains
$X$. Restricting an SJF schedule that realizes this member to
$\LessThan{S}{p}\cup X$ yields an SJF schedule $\Seq$ for which
$\EqualTo{\Greedy{\Seq}}{p}=X$.
Let $\Seq'$ be an SJF schedule of $Y\setminus X$ and let
$Z$ denote $\EqualTo{\Greedy{\Seq\Seq'}}{p}$.
Then
$X\subseteq Z\subseteq X\cup Y$.
Extending $\Seq\Seq'$ to an SJF schedule of $S$ shows that $Z$ is
contained in a member of $\Bases_{S,p}$; hence $Z$ belongs to
$\Ind_{S,p}$.

It remains to show that $Z$ properly contains $X$. Since $Y$ belongs
to $\Ind_{S,p}$, there is an SJF schedule of
$\LessThan{S}{p}\cup Y$ in which every job of $Y$ is accepted.
Appending the jobs of $X\setminus Y$ gives an SJF schedule of
$R=\LessThan{S}{p}\cup X\cup Y$
whose greedy output contains at least $|Y|$ jobs of processing time
$p$. By Lemma~\ref{lem:greedyIsShort}, all greedy runs on $R$ accept
the same total number of jobs. Their restrictions to
$\LessThan{S}{p}$ likewise accept the same number of jobs, again by
Lemma~\ref{lem:greedyIsShort}. Consequently,
$|Z|\geq |Y|>|X|$.
Thus there is a job $y$ in
$Z\setminus X\subseteq Y\setminus X$.
Since $X+y\subseteq Z$ and $Z$ belongs to $\Ind_{S,p}$, the hereditary
property already established implies $X+y$ belongs to
$\Ind_{S,p}$. This proves the exchange property.
\end{proof}

For any job-set $S$ and any positive real $p$, we define
$\Matroid_{S,p}$ as the matroid $(\EqualTo{S}{p},\Ind_{S,p})$ of
Lemma~\ref{lem:tierMatroid}. The definition of $\Ind_{S,p}$ implies
$\Bases_{S,p}$ is the set of all bases of matroid $\Matroid_{S,p}$.

For any job-set $S$, we define $\Ind_S$ as
\[
\{X\subseteq S\mid\forall p\in\PosReals:\EqualTo{X}{p}\in\Ind_{S,p}\}.
\]

\begin{lemma}
\label{lem:matroidOverall}
Let $S$ be a job-set. Then $\Ind_S$ is contained in
$\Feasible{S}$ and $(S,\Ind_S)$ is a matroid.
\end{lemma}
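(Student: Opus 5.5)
The plan is to prove the two assertions separately. The containment $\Ind_S\subseteq\Feasible{S}$ will come from Lemma~\ref{lem:tierOrthog}; the matroid property will come from Lemma~\ref{lem:tierMatroid}, since $\Ind_S$ is the direct sum of the tier matroids.

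\textbf{Containment.} Let $X$ belong to $\Ind_S$. For each positive real $p$ with $\EqualTo{S}{p}\neq\emptyset$, the definition of $\Ind_{S,p}$ gives a member $B_p$ of $\Bases_{S,p}$ with $\EqualTo{X}{p}\subseteq B_p$. Only finitely many tiers are nonempty, so we may put $B=\bigcup_p B_p$.

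We have $\EqualTo{B}{p}=B_p\in\Bases_{S,p}$ for every $p$. At an empty tier, $\EqualTo{B}{p}=\emptyset$, which is the unique member of $\Bases_{S,p}$ there. Lemma~\ref{lem:tierOrthog} therefore gives $B\in\Greedy{S}$, so $B$ is feasible. By construction $X\subseteq B$. Feasibility is hereditary: deleting jobs only lowers each $\ProcTwo{\cdot}{d}$, so Lemma~\ref{lem:feasibleGap} applies. Hence $X$ is feasible.

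\textbf{Matroid axioms.} The empty set lies in every $\Ind_{S,p}$, so $\emptyset\in\Ind_S$. For heredity, take $Y\subseteq X\in\Ind_S$. Then $\EqualTo{Y}{p}\subseteq\EqualTo{X}{p}$ for every $p$, and Lemma~\ref{lem:tierMatroid} (heredity of each tier matroid) gives $\EqualTo{Y}{p}\in\Ind_{S,p}$.

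For exchange, let $X,Y\in\Ind_S$ with $|X|<|Y|$. Since $|X|=\sum_p|\EqualTo{X}{p}|$ and likewise for $Y$, some tier $p$ has $|\EqualTo{X}{p}|<|\EqualTo{Y}{p}|$. The exchange property of $\Matroid_{S,p}$ (Lemma~\ref{lem:tierMatroid}) yields $y\in\EqualTo{Y}{p}\setminus\EqualTo{X}{p}$ with $\EqualTo{X}{p}+y\in\Ind_{S,p}$. Adding $y$ changes only tier $p$, so $X+y\in\Ind_S$ and $y\in Y\setminus X$.

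\textbf{Main obstacle.} The only step needing care is the containment, specifically assembling per-tier bases into a single greedy output. That is exactly what Lemma~\ref{lem:tierOrthog} provides, so no new argument is required. The one detail to check is that the empty tiers, together with the single empty-set member of $\Bases_{S,p}$ at each of them, are handled as above.
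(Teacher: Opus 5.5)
Your proof is correct and follows essentially the same route as the paper. Containment comes from assembling per-tier members of $\Bases_{S,p}$ into a greedy output via Lemma~\ref{lem:tierOrthog} and then using heredity of feasibility, and the matroid axioms come tier by tier from Lemma~\ref{lem:tierMatroid}. The only differences are minor. You take $B_p$ directly from the definition of $\Ind_{S,p}$ instead of extending to a basis of $\Matroid_{S,p}$, and you handle empty tiers explicitly; both changes are harmless and, if anything, slightly cleaner.
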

\begin{proof}
Let $X$ belong to $\Ind_S$. For each processing time $p$, extend
$\EqualTo{X}{p}$ to a basis $B_p$ of $\Matroid_{S,p}$. By the
definition of $\Ind_{S,p}$, each $B_p$ belongs to $\Bases_{S,p}$.
Lemma~\ref{lem:tierOrthog} therefore implies
$B=\bigcup_p B_p$ belongs to $\Greedy{S}$ and hence is feasible.
Since $X\subseteq B$, the hereditary property of feasibility implies
$X$ is feasible. Thus $\Ind_S$ is contained in $\Feasible{S}$.

We now argue that $(S,\Ind_S)$ is a matroid.  Since $\emptyset$
belongs to $\Ind_S$, we know that $\Ind_S$ is nonempty.  The
hereditary property is immediate. It remains to establish the exchange
property.  Let $X$ and $Y$ belong to $\Ind_S$ and assume that
$|X|<|Y|$. We need to prove that there exists a $y$ in $Y\setminus X$
such that $X+y$ belongs to $\Ind_S$. Since $|Y|>|X|$, there is a
positive real $p$ such that $|\EqualTo{X}{p}|<|\EqualTo{Y}{p}|$. Since
$M_{S,p}=(\EqualTo{S}{p},\Ind_{S,p})$ is a matroid, there exists a $y$
in $\EqualTo{Y}{p}\setminus\EqualTo{X}{p}$ such that
$\EqualTo{X}{p}+y$ belongs to $\Ind_{S,p}$. Observe that $y$ belongs
to $Y\setminus X$. Moreover, the definition of $\Ind_S$ implies $X+y$
belongs to $\Ind_S$, as required.
\end{proof}

For any job-set $S$, we define $\Matroid_S$ as the matroid of
Lemma~\ref{lem:matroidOverall} and we define $\Bases_S$ as the set of
all bases of $\Matroid_S$.

\begin{lemma}
\label{lem:basesGreedy}
Let $S$ be a job-set. Then $\Bases_S=\Greedy{S}$.
\end{lemma}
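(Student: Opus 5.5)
We prove the two inclusions separately. The key fact is that bases of a direct-sum-type matroid decompose tier by tier. The main tool is Lemma~\ref{lem:tierOrthog}, together with the observation (recorded after Lemma~\ref{lem:tierMatroid}) that $\Bases_{S,p}$ is exactly the set of bases of $\Matroid_{S,p}$.

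\textbf{Step 1: bases of $\Matroid_S$ are unions of tier bases.} We show that a subset $B\subseteq S$ is a base of $\Matroid_S$ if and only if $\EqualTo{B}{p}\in\Bases_{S,p}$ for every positive real $p$.

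For the ``if'' direction, suppose each $\EqualTo{B}{p}$ is a base of $\Matroid_{S,p}$. Then $B\in\Ind_S$ by definition. Suppose $B+y\in\Ind_S$ for some $y\notin B$, and set $q=y.p$. Then $\EqualTo{B}{q}+y\in\Ind_{S,q}$ properly contains the base $\EqualTo{B}{q}$, which is a contradiction. Hence $B$ is inclusion-maximal in $\Ind_S$, that is, a base.

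For the ``only if'' direction, let $B$ be a base of $\Matroid_S$. If some $\EqualTo{B}{q}$ were not a base of $\Matroid_{S,q}$, then some $y\in\EqualTo{S}{q}\setminus B$ would satisfy $\EqualTo{B}{q}+y\in\Ind_{S,q}$. The other tiers of $B+y$ are unchanged, so $B+y\in\Ind_S$, contradicting maximality.

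This step only uses the definitions, so it is routine. The one point that needs care is that each $\Bases_{S,p}$ is nonempty, and this is guaranteed because $\Sjf{S}$ is nonempty.

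\textbf{Step 2: combine with tier orthogonality.} By Step~1, $\Bases_S=\{X\subseteq S\mid \EqualTo{X}{p}\in\Bases_{S,p}\text{ for all }p\}$. By Lemma~\ref{lem:tierOrthog}, the right-hand side equals $\Greedy{S}$, which proves the claim.

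The only potential subtlety is infinitely many $p$. For every $p$ not occurring as a processing time in $S$, the tier $\EqualTo{S}{p}$ is empty, so $\Bases_{S,p}=\{\emptyset\}$ and the condition is trivial. I expect no real obstacle beyond stating Step~1 cleanly.
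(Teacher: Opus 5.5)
Your proposal is correct and matches the paper's argument. The paper cites the standard fact that a basis of a direct sum is the union of one basis from each summand (from Oxley) and then applies Lemma~\ref{lem:tierOrthog}, whereas your Step~1 proves that direct-sum fact directly from the definitions of $\Ind_S$ and $\Ind_{S,p}$, using the recorded observation that $\Bases_{S,p}$ is exactly the set of bases of $\Matroid_{S,p}$.
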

\begin{proof}
A basis of a direct sum is the union of one basis from each
summand~\cite[Sec.~4.2]{Oxley11}; the result now follows from
Lemma~\ref{lem:tierOrthog} and the definition of $\Ind_S$.
\end{proof}

\subsection{The Structure of Matroid $\Matroid_{S,p}$}
\label{sec:structure}

For any feasible job-set $X$ and any job $\alpha$ such that $\alpha.i$
does not belong to $\Ids{X}$, we define $\Level{X}{\alpha}$ as
$\lfloor\SlackTwo{X}{\alpha.d}/\alpha.p\rfloor$.

\begin{lemma}
\label{lem:addLevel}
For any feasible job-set $X$ and any job $\alpha$ such that
$\alpha.i$ does not belong to $\Ids{X}$, the job-set $X+\alpha$ is
feasible if and only if $\Level{X}{\alpha}\geq 1$.
\end{lemma}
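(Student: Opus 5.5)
The plan is to reduce the statement to Lemma~\ref{lem:feasibleToAdd} together with an elementary fact about floors. Since $X$ is feasible and $\alpha.i\notin\Ids{X}$, Lemma~\ref{lem:feasibleToAdd} says that $X+\alpha$ is feasible if and only if $\SlackTwo{X}{\alpha.d}\geq\alpha.p$. So it suffices to prove
\[
\SlackTwo{X}{\alpha.d}\geq\alpha.p
\quad\Longleftrightarrow\quad
\bigl\lfloor\SlackTwo{X}{\alpha.d}/\alpha.p\bigr\rfloor\geq 1 .
\]

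Put $s=\SlackTwo{X}{\alpha.d}$. Because $\alpha.p>0$, dividing by $\alpha.p$ preserves the inequality, so $s\geq\alpha.p$ holds if and only if $s/\alpha.p\geq 1$. For any real $r$, we have $\lfloor r\rfloor\geq 1$ if and only if $r\geq 1$. The reason is that $\lfloor r\rfloor\leq r$ gives one direction, and $1$ is an integer at most $r$ gives the other. Combining these two facts gives the displayed equivalence.

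The only point needing care is the degenerate case $s=+\infty$. This happens when $X$ has no job with deadline at least $\alpha.d$ and $\Gap{X}{\alpha.d}$ is finite. In fact $\Gap{X}{\alpha.d}$ is always a finite real, so by the definition of $\SlackTwo{X}{d}$ as a minimum that includes $\Gap{X}{d}$, the value $s$ is finite. Hence no convention for $\lfloor+\infty\rfloor$ is needed. Feasibility of $X$ together with Lemma~\ref{lem:feasibleGap} gives $s\geq 0$, so the level is a nonnegative integer; this is not needed for the equivalence, but it is worth recording. Otherwise the argument is routine, and I expect no real obstacle beyond checking the finiteness remark.
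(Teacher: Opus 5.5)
Your proof is correct and follows the paper's route: the paper's proof is the one-line ``immediate from Lemma~\ref{lem:feasibleToAdd},'' and your floor equivalence $\lfloor s/\alpha.p\rfloor\geq 1\iff s\geq\alpha.p$ (using $\alpha.p>0$) just spells out that step. One wording slip: you say $s=+\infty$ ``happens when'' $X$ has no job with deadline at least $\alpha.d$, but, as you then correctly observe, $\SlackTwo{X}{\alpha.d}\leq\Gap{X}{\alpha.d}<+\infty$ always holds, so that case never occurs.
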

\begin{proof}
Immediate from Lemma~\ref{lem:feasibleToAdd}.
\end{proof}

\begin{lemma}
\label{lem:addLevelTwo}
Let $X$ be a feasible job-set, let $\alpha$ and $\beta$ be jobs such
that $\alpha.i\not=\beta.i$ and
$\{\alpha.i,\beta.i\}\cap\Ids{X}=\emptyset$, and assume that
$\alpha.d\leq\beta.d$, $\alpha.p=\beta.p$, and $\Level{X}{\alpha}\geq
1$.  Then job-set $X+\alpha$ is feasible and
$\Level{X+\alpha}{\beta}=\Level{X}{\beta}-1$.
\end{lemma}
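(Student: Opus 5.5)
Feasibility of $X+\alpha$ is immediate from Lemma~\ref{lem:addLevel}, since $\Level{X}{\alpha}\geq 1$. The substance is the level identity. Write $p=\alpha.p=\beta.p$. The plan is to show
\[
\SlackTwo{X+\alpha}{\beta.d}=\SlackTwo{X}{\beta.d}-p .
\]
Dividing by $p$ and taking floors then gives the claim, because $\lfloor s/p-1\rfloor=\lfloor s/p\rfloor-1$ for every real $s$. For the case $\SlackTwo{X}{\beta.d}=+\infty$ we use the obvious convention that the level is $+\infty$ on both sides.

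To establish the slack identity, use the characterization $\SlackTwo{S}{d}=\min_{t\geq d}\Gap{S}{t}$ stated after the definition of $\SlackTwo{S}{d}$. Since $\alpha.d\leq\beta.d$, every $t\geq\beta.d$ satisfies $t\geq\alpha.d$. Hence $\ProcTwo{X+\alpha}{t}=\ProcTwo{X}{t}+p$, and so $\Gap{X+\alpha}{t}=\Gap{X}{t}-p$ for every such $t$. Taking the minimum over $t\geq\beta.d$ yields the identity directly.

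If instead we worked with the finite definition $\min(\Gap{S}{d},\min_{\gamma\in S:d\leq\gamma.d}\Gap{S}{\gamma.d})$, there is one extra point. The candidate set for $X+\alpha$ may include $\alpha$ itself, when $\alpha.d=\beta.d$. In that case $\Gap{X+\alpha}{\alpha.d}=\Gap{X+\alpha}{\beta.d}$, so it duplicates the term $\Gap{\cdot}{\beta.d}$ that is already present, and the minimum is unchanged. With the $\min_{t\geq d}$ form this bookkeeping disappears.

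There is no real obstacle. The only step needing care is matching the two forms of $\SlackTwo{\cdot}{\cdot}$, which the paper already asserts are equivalent, together with the floor arithmetic. That arithmetic is valid because the hypothesis $\alpha.p=\beta.p$ makes the subtracted quantity exactly one unit of $\beta.p$. I would also note that $\beta$ is a legitimate argument for $\Level{X+\alpha}{\cdot}$: its identifier is distinct from $\alpha.i$ and not in $\Ids{X}$, and $X+\alpha$ is feasible.
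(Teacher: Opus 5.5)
Your proposal is correct and follows the paper's proof. It reduces the claim to $\SlackTwo{X+\alpha}{\beta.d}=\SlackTwo{X}{\beta.d}-\beta.p$ via the floor identity, then observes that every relevant gap drops by exactly $\alpha.p$; the paper works in the finite-min form and handles $\alpha.d=\beta.d$ by the same duplicate-term remark you make, which your $\min_{t\geq d}$ form makes unnecessary. One small point: the case $\SlackTwo{X}{\beta.d}=+\infty$ never arises, because the term $\Gap{X}{\beta.d}$ is always finite.
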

\begin{proof}
Lemma~\ref{lem:addLevel} implies $X+\alpha$ is feasible.  To establish
that $\Level{X+\alpha}{\beta}=\Level{X}{\beta}-1$, it is sufficient to
prove that $\SlackTwo{X+\alpha}{\beta.d}=\SlackTwo{X}{\beta.d}-\beta.p$;
indeed, for every real $z$ and positive $p$,
$\lfloor(z-p)/p\rfloor=\lfloor z/p\rfloor-1$.
We have
\begin{eqnarray*}
\SlackTwo{X+\alpha}{\beta.d}
& = &
\min_{\gamma\in X+\alpha+\beta:\beta.d\leq\gamma.d}\Gap{X+\alpha}{\gamma.d}\\
& = &
\min_{\gamma\in X+\beta:\beta.d\leq\gamma.d}\Gap{X+\alpha}{\gamma.d}\\
& = &
\min_{\gamma\in X+\beta:\beta.d\leq\gamma.d}\Gap{X}{\gamma.d}-\alpha.p\\
& = &
\SlackTwo{X}{\beta.d}-\beta.p,
\end{eqnarray*}
where the second equality holds immediately if $\alpha.d<\beta.d$ and
also holds if $\alpha.d=\beta.d$ since $\alpha.d=\beta.d$ implies
$\Gap{X+\alpha}{\alpha.d}=\Gap{X+\alpha}{\beta.d}$.
\end{proof}

\begin{lemma}
\label{lem:addLevelThree}
Let $X$ be a feasible job-set and let
$Y=\{\alpha_1,\ldots,\alpha_k\}$ be a job-set such that $\Ids{X}\cap
\Ids{Y}=\emptyset$, $\alpha_1.d\leq\cdots\leq\alpha_k.d$, and
$\alpha_1.p=\cdots=\alpha_k.p$. Then job-set $X\cup Y$ is feasible if
and only if $\Level{X}{\alpha_i}\geq i$ for all $i$ in
$\{1,\ldots,k\}$.
\end{lemma}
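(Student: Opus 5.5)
We prove the statement by induction on $k$, adding the jobs of $Y$ one at a time in the given deadline order and tracking levels with Lemma~\ref{lem:addLevelTwo}. For $j$ in $\{0,\ldots,k\}$ put $X_j=X\cup\{\alpha_1,\ldots,\alpha_j\}$. The key intermediate claim is the following: if $X_{j}$ is feasible, then for every $i>j$ we have $\Level{X_j}{\alpha_i}=\Level{X}{\alpha_i}-j$.

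\textbf{The intermediate claim.} We establish it by induction on $j$; the case $j=0$ is trivial. Suppose the claim holds for $j-1$ and $X_j$ is feasible. Then $X_{j-1}\subseteq X_j$ is feasible, and by Lemma~\ref{lem:addLevel} we get $\Level{X_{j-1}}{\alpha_j}\geq 1$. We now apply Lemma~\ref{lem:addLevelTwo} to $X_{j-1}$ with $\alpha=\alpha_j$ and $\beta=\alpha_i$ for each $i>j$. Its hypotheses hold: $\alpha_j.d\leq\alpha_i.d$, the processing times are equal, the ids are disjoint, and $\Level{X_{j-1}}{\alpha_j}\geq 1$. The lemma gives $\Level{X_j}{\alpha_i}=\Level{X_{j-1}}{\alpha_i}-1$. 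By the inductive hypothesis this equals $\Level{X}{\alpha_i}-j$. The ids of the $\alpha_i$ are pairwise distinct because $Y$ is a job-set.

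\textbf{Only-if direction.} Assume $X\cup Y$ is feasible. Then every $X_j$ is feasible, since feasibility is hereditary: restrict an EDF schedule, or use Lemma~\ref{lem:feasibleGap}, since removing jobs only increases gaps. Fix $i$. Because $X_{i-1}+\alpha_i=X_i$ is feasible, Lemma~\ref{lem:addLevel} gives $\Level{X_{i-1}}{\alpha_i}\geq 1$. By the claim with $j=i-1$, this quantity equals $\Level{X}{\alpha_i}-(i-1)$. Hence $\Level{X}{\alpha_i}\geq i$.

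\textbf{If direction.} Assume $\Level{X}{\alpha_i}\geq i$ for all $i$. We show by induction on $j$ that $X_j$ is feasible; $X_0=X$ is feasible by hypothesis. If $X_{j-1}$ is feasible, the claim gives $\Level{X_{j-1}}{\alpha_j}=\Level{X}{\alpha_j}-(j-1)\geq 1$. Lemma~\ref{lem:addLevel} then shows that $X_j$ is feasible. Taking $j=k$ yields that $X\cup Y$ is feasible.

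\textbf{Main obstacle.} The only delicate point is checking the hypotheses of Lemma~\ref{lem:addLevelTwo} at each step, in particular handling ties $\alpha_j.d=\alpha_i.d$. That lemma already permits $\alpha.d\leq\beta.d$, so ties are covered, and the remainder is bookkeeping.
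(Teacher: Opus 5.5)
Your proof is correct and follows the same route as the paper's. Both track the prefixes $X\cup\{\alpha_1,\ldots,\alpha_j\}$, use repeated applications of Lemma~\ref{lem:addLevelTwo} to show that $\Level{X\cup\{\alpha_1,\ldots,\alpha_{i-1}\}}{\alpha_i}=\Level{X}{\alpha_i}-(i-1)$, and then apply Lemma~\ref{lem:addLevel} inductively for both directions; you additionally make explicit what the paper compresses into ``repeated application,'' namely that each intermediate step has the hypothesis $\Level{\cdot}{\alpha_j}\geq 1$ because feasibility is hereditary.
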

\begin{proof}
For $i$ in $\{0,\ldots,k\}$, let
$Y_i=\{\alpha_1,\ldots,\alpha_i\}$. Whenever $X\cup Y_{i-1}$ is
feasible, repeated application of Lemma~\ref{lem:addLevelTwo} gives
\[
\Level{X\cup Y_{i-1}}{\alpha_i}
=\Level{X}{\alpha_i}-(i-1).
\]
Lemma~\ref{lem:addLevel} therefore implies, inductively on $i$, that
$X\cup Y_i$ is feasible for every $i$ if and only if
$\Level{X}{\alpha_i}\geq i$ for every $i$. Since $Y_k=Y$, the result
follows.
\end{proof}

Given sets $U_1\subseteq\cdots\subseteq U_k\subseteq U$ and integers
$0\leq n_1\leq\cdots\leq n_k$, it is known that $(U,\Ind)$ is a matroid if
we define $\Ind$ as the set of all subsets $X$ of $U$ such that
$|X\cap U_i|\leq n_i$ for all $i$ in $\{1,\ldots,k\}$. Such a matroid
(which is a special case of a laminar matroid) is said to be nested~\cite{FifeOxley17}.

\begin{theorem}
\label{thm:level}
For any job-set $S$ and any positive real $p$, the matroid
$\Matroid_{S,p}$ is nested.
\end{theorem}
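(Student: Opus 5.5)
Fix a job-set $S$ and a processing time $p$, and let $R=\LessThan{S}{p}$. Choose any $\Seq$ in $\Sjf{R}$ and put $B=\Greedy{\Seq}$; $B$ is feasible.

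\textbf{Step 1: the independent sets of $\Matroid_{S,p}$ are the addable sets.} I will show that
\[
\Ind_{S,p}=\{Y\subseteq\EqualTo{S}{p}\mid B\cup Y\text{ is feasible}\}.
\]
For the inclusion $\subseteq$, take $Y$ in $\Ind_{S,p}$. By the first observation in the proof of Lemma~\ref{lem:tierOrthog}, we may assume that the shorter jobs are processed in the order $\Seq$. Then some greedy run accepts exactly $B$ from $R$, followed by a superset of $Y$ in tier $p$, so $B\cup Y$ is feasible. For the inclusion $\supseteq$, suppose $B\cup Y$ is feasible. Run the greedy algorithm on $\Seq$ followed by the jobs of $Y$ and then the rest of $\EqualTo{S}{p}$. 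Each job of $Y$ is accepted, since the current set is a subset of the feasible set $B\cup Y$. Extending this to an SJF order of $S$ gives a member of $\Bases_{S,p}$ containing $Y$.

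\textbf{Step 2: the nested constraints.} Order $\EqualTo{S}{p}\setminus B$ as $\alpha_1,\dots,\alpha_m$ by nondecreasing deadline, breaking ties by identifier. (All of $\EqualTo{S}{p}$ is disjoint from $B\subseteq R$.) Put $\ell_j=\Level{B}{\alpha_j}$. Since $\SlackTwo{B}{d}$ is nondecreasing in $d$ and all these jobs share processing time $p$, the sequence $\ell_1,\dots,\ell_m$ is nondecreasing. Set $U_j=\{\alpha_1,\ldots,\alpha_j\}$, which gives a chain. Now take $Y\subseteq\EqualTo{S}{p}$ and list $Y$ in the induced order as $\alpha_{j_1},\dots,\alpha_{j_k}$. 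Lemma~\ref{lem:addLevelThree} says $B\cup Y$ is feasible iff $\ell_{j_i}\geq i$ for all $i$. I claim this is equivalent to
\[
|Y\cap U_j|\leq \max(\ell_j,0)\quad\text{for all }j.
\]
Forward direction: given $j$, let $i=|Y\cap U_j|$. If $i\geq1$, then $j_i\leq j$, so $i\le\ell_{j_i}\le\ell_j$ by monotonicity. Reverse direction: for each $i$, apply the bound at $j=j_i$ to get $i=|Y\cap U_{j_i}|\le\ell_{j_i}$ (this forces $\ell_{j_i}\ge1$, so the $\max$ is harmless). Take $n_j=\max(\ell_j,0)$, which is nondecreasing, and $U=\EqualTo{S}{p}$. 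Then $\Ind_{S,p}$ has exactly the nested form, so $\Matroid_{S,p}$ is nested.

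\textbf{Main obstacle.} The delicate point is Step 1: the paper defines $\Ind_{S,p}$ via greedy runs over all SJF orders, so I must confirm it does not depend on the chosen $B$. Lemma~\ref{lem:tierOrthog}'s observation supplies this, and I would state it explicitly. A secondary check is the monotonicity of $\ell_j$, including ties in deadline; equal deadlines give equal levels, so ties cause no trouble.
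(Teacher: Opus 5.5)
Your proof is correct and follows essentially the same route as the paper. You identify $\Ind_{S,p}$ with the subsets of $\EqualTo{S}{p}$ that can be added to a fixed lower-tier greedy output, using the tie-order independence behind Lemma~\ref{lem:tierOrthog}, and then translate the condition of Lemma~\ref{lem:addLevelThree} into nested cardinality constraints. The only difference is the chain. You use the deadline prefixes $U_j$ with bounds $\Level{B}{\alpha_j}$, while the paper uses the level sets $L_i=\{\alpha\in\EqualTo{S}{p}\mid\Level{B}{\alpha}\leq i\}$ with bounds $i$. Your form is the one the paper later writes as $|I\cap D_k|\leq b_k$ in Section~\ref{sec:flow}, and the $\max(\ell_j,0)$ is unnecessary because $B$ is feasible, so every level is nonnegative.
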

\begin{proof}
Let $X$ belong to $\Greedy{\LessThan{S}{p}}$, let $A$ denote $\EqualTo{S}{p}$, and let
$L_i$ denote $\{\alpha\in A\mid\Level{X}{\alpha}\leq i\}$
for all $i$ in $\{0,\ldots,|A|\}$.
Let $Y=\{\alpha_1,\ldots,\alpha_k\}$ be a subset of $A$ with
$\alpha_1.d\leq\cdots\leq\alpha_k.d$. 

We first observe that
\[
Y\in\Ind_{S,p}
\quad\Longleftrightarrow\quad
X\cup Y\text{ is feasible}.
\]
For the forward implication, extend $Y$ to a member of
$\Bases_{S,p}$ and apply Lemma~\ref{lem:tierOrthog}. Conversely, if
$X\cup Y$ is feasible, use an SJF order realizing $X$ on the shorter
jobs and place the jobs of $Y$ first within tier $p$. Every job of $Y$
is then accepted, so $Y$ is contained in a member of
$\Bases_{S,p}$.

Since level is nondecreasing
with deadline, Lemma~\ref{lem:addLevelThree} gives
\[
\begin{aligned}
Y\in\Ind_{S,p}
&\iff X\cup Y\text{ is feasible}\\
&\iff \Level{X}{\alpha_j}\geq j\mbox{ for all $j$ in $\{1,\ldots,k\}$}\\
&\iff |Y\cap L_i|\leq i\mbox{ for all $i$ in $\{0,\ldots,|A|\}$}.
\end{aligned}
\]
Since $L_0\subseteq L_1\subseteq\cdots\subseteq L_{|A|}$, these are
nested-matroid constraints.
\end{proof}

As a further consequence, if ties within each processing-time tier
are broken by nondecreasing secondary weight, then the greedy output
has minimum total secondary weight among all members of $\Greedy{S}$.
Indeed, within each tier the algorithm is the standard
minimum-weight-base greedy algorithm for $\Matroid_{S,p}$, and the
tier matroids form a direct sum.

\section{A Flow-Polymatroid View}
\label{sec:flow}
The augmented BST of Section~\ref{sec:augment} answers one feasibility query at a time.  A flow network
packages all deadline prefixes into one rank function.  The resulting
threshold formula has three complementary consequences: it characterizes
feasibility by full rank contribution, defines a scheduling polymatroid for
fractional capacity, and, after contraction and quantization, recovers the
nested matroid governing each equal-processing-time tier.

\subsection{A Threshold-Cut Rank}

Let
\[
  0=d_0<d_1<\cdots<d_m
\]
be the distinct deadlines in $S$, put $\Delta_k=d_k-d_{k-1}$, and choose
$C>\Proc{S}$.  For each job $\alpha$, let $j(\alpha)$ be the index such that
$\alpha.d=d_{j(\alpha)}$.  For $X\subseteq S$, construct $\mathcal N(X)$ with
a job vertex $v_\alpha$ for each $\alpha\in X$, an interval vertex $u_k$ for
each $1\leq k\leq m$, and arcs
\[
\begin{array}{rcll}
  s&\to&v_\alpha&(\alpha.p),\\
  v_\alpha&\to&u_{j(\alpha)}&(C),\\
  u_k&\to&u_{k-1}&(C)\qquad(2\leq k\leq m),\\
  u_k&\to&t&(\Delta_k)\qquad(1\leq k\leq m),
\end{array}
\]
where parentheses give capacities.  Flow entering $u_{j(\alpha)}$ may use
that interval or move backward to any earlier interval.  The network has
$O(|X|+m)$ vertices and arcs.  Let $r(X)$ denote its maximum-flow value.
Equivalently, one may build the network for all of $S$ and delete the source
arcs of jobs outside $X$.

\begin{theorem}[Threshold-cut formula]
\label{thm:threshold-cut}
For every $X\subseteq S$,
\begin{equation}
  r(X)=\min_{0\leq\ell\leq m}
       \left(d_\ell+
       \sum_{\substack{\alpha\in X\\\alpha.d>d_\ell}}\alpha.p\right).
  \label{eq:flow-rank}
\end{equation}
\end{theorem}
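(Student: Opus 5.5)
By max-flow/min-cut, $r(X)$ equals the minimum capacity of an $s$--$t$ cut in $\mathcal N(X)$. The plan has two parts. First, we show that every threshold cut realizes the term indexed by $\ell$, which gives the upper bound. Second, we show that some minimum cut can be taken to be a threshold cut, which gives the matching lower bound.

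\emph{Upper bound.} For a fixed $\ell$ in $\{0,\ldots,m\}$, let the source side be
\[
\{s\}\cup\{v_\alpha\mid\alpha\in X,\ \alpha.d\leq d_\ell\}\cup\{u_1,\ldots,u_\ell\}.
\]
We check which arcs leave this set.
\begin{itemize}
\item Source arcs $s\to v_\alpha$ with $\alpha.d>d_\ell$ contribute $\sum_{\alpha\in X:\alpha.d>d_\ell}\alpha.p$.
\item Job arcs $v_\alpha\to u_{j(\alpha)}$ with $\alpha.d\le d_\ell$ stay inside the source side, since $j(\alpha)\le\ell$.
\item Backward arcs $u_k\to u_{k-1}$ all point from higher to lower index. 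Such an arc can leave the source side only if $u_k$ is on it and $u_{k-1}$ is not, which never happens because the source side contains exactly $u_1,\ldots,u_\ell$.
\item Sink arcs $u_k\to t$ for $k\le\ell$ contribute $\sum_{k\le\ell}\Delta_k=d_\ell$.
\end{itemize}
So this cut has capacity exactly the $\ell$-th term of~(\ref{eq:flow-rank}), and $r(X)$ is at most the minimum over $\ell$.

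\emph{Lower bound.} Let $A$ be the source side of any minimum cut. Because $C>\Proc{S}$, and the cut $\{s\}$ alone has capacity $\Proc{X}<C$, no minimum cut can contain an arc of capacity $C$. This has two consequences.
\begin{itemize}
\item The set $A\cap\{u_1,\ldots,u_m\}$ is closed downward under $k\mapsto k-1$, so it equals $\{u_1,\ldots,u_\ell\}$ for some $\ell$.
\item If $v_\alpha\in A$, then $u_{j(\alpha)}\in A$, so $j(\alpha)\le\ell$.
\end{itemize}
The cut therefore pays $\alpha.p$ for every $\alpha$ with $j(\alpha)>\ell$, plus $d_\ell$ for the sink arcs. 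Its capacity is at least the $\ell$-th term, and hence at least the minimum over $\ell$.

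The only delicate point is ruling out cuts that use capacity-$C$ arcs, and the choice $C>\Proc{S}$ handles exactly that. An alternative is to take an arbitrary cut, move vertices to enforce downward closure and the job condition, and argue that this never increases the capacity. The bound on $C$ makes that repair unnecessary.
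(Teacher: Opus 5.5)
Your proof is correct and follows essentially the same route as the paper. Both use max-flow/min-cut and the observation that $C>\Proc{S}\geq\Proc{X}$ keeps $C$-arcs out of any minimum cut, which makes the source-side interval vertices a prefix $u_1,\ldots,u_\ell$ and puts $v_\alpha$ on the sink side whenever $\alpha.d>d_\ell$; both then exhibit the threshold cut for each $\ell$. You separate the upper and lower bounds explicitly, while the paper does both at once by optimizing the job-vertex placements for a fixed interval prefix.
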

\begin{proof}
The cut whose source side is $\{s\}$ has capacity $\Proc{X}<C$; hence a
minimum cut crosses no $C$-arc.  Its source-side interval vertices form a
prefix, because $u_k$ on the source side forces $u_{k-1}$ there as well.
Write this prefix as $u_1,\ldots,u_\ell$, allowing $\ell=0$.

Fix the interval prefix.  When $\alpha.d\leq d_\ell$, placing $v_\alpha$ on
the source side crosses neither $s\to v_\alpha$ nor a $C$-arc.  When
$\alpha.d>d_\ell$, the vertex $u_{j(\alpha)}$ is on the sink side, so
$v_\alpha$ must be there as well.  The minimum cut with this interval prefix
therefore has capacity
\[
  \sum_{k\leq\ell}\Delta_k+
  \sum_{\alpha\in X:\,\alpha.d>d_\ell}\alpha.p
  =
  d_\ell+
  \sum_{\alpha\in X:\,\alpha.d>d_\ell}\alpha.p.
\]
For every $\ell$, these placements define a cut with the displayed capacity.
The max-flow min-cut theorem now proves~\eqref{eq:flow-rank}.
\end{proof}

Figure~\ref{fig:flow-threshold} shows the network and one threshold cut.

\begin{figure}[b]
\centering
\begin{tikzpicture}[
  >=Stealth,
  v/.style={draw,circle,minimum size=7mm,inner sep=1pt,font=\footnotesize},
  lab/.style={font=\footnotesize}
]
  \node[v] (s) at (0,0) {$s$};
  \node[v] (a) at (1.8,1.5) {$\alpha$};
  \node[v] (b) at (1.8,0) {$\beta$};
  \node[v] (c) at (1.8,-1.5) {$\gamma$};
  \node[v] (u1) at (4.3,1.5) {$u_1$};
  \node[v] (u2) at (4.3,0) {$u_2$};
  \node[v] (u3) at (4.3,-1.5) {$u_3$};
  \node[v] (t) at (6.8,0) {$t$};

  \draw[->] (s)--node[above left,lab] {$4$}(a);
  \draw[->] (s)--node[above,lab] {$1$}(b);
  \draw[->] (s)--node[below left,lab] {$1$}(c);

  \draw[->,densely dotted] (a)--node[above,lab] {$C$}(u1);
  \draw[->,densely dotted] (b)--node[above,lab] {$C$}(u2);
  \draw[->,densely dotted] (c)--node[below,lab] {$C$}(u3);
  \draw[->,densely dotted] (u3)--node[right,lab] {$C$}(u2);
  \draw[->,densely dotted] (u2)--node[right,lab] {$C$}(u1);

  \draw[->] (u1)--node[above,lab] {$3$}(t);
  \draw[->] (u2)--node[above,lab] {$2$}(t);
  \draw[->] (u3)--node[below,lab] {$2$}(t);

  \node[draw,dashed,circle,fit=(s),inner sep=2.5pt] {};
  \node[draw,dashed,rounded corners,fit=(a)(u1),inner sep=4pt,
        label={[lab]above:source-side vertices for $d_1=3$}] {};
  \node[lab,anchor=west] at ($(t)+(7mm,10mm)$)
    {$(\alpha.p,\alpha.d)=(4,3)$};
  \node[lab,anchor=west] at ($(t)+(7mm,2mm)$)
    {$(\beta.p,\beta.d)=(1,5)$};
  \node[lab,anchor=west] at ($(t)+(7mm,-6mm)$)
    {$(\gamma.p,\gamma.d)=(1,7)$};
\end{tikzpicture}
\caption{The flow network and the threshold cut at $d_1=3$.  Its capacity is
$3+1+1=5$: the first interval contributes three units, and the two jobs due
later than $3$ cross from $s$.  The total offered work is six.}
\label{fig:flow-threshold}
\end{figure}

\begin{corollary}[Feasibility and full marginal]
\label{cor:full-marginal}
A set $X$ is feasible if and only if $r(X)=\Proc{X}$.  If $X$ is feasible and
$\alpha\notin X$, then
\begin{equation}
  X+\alpha\text{ is feasible}
  \quad\Longleftrightarrow\quad
  r(X+\alpha)-r(X)=\alpha.p.
  \label{eq:full-marginal}
\end{equation}
Thus the insertion query asks whether the new job contributes its full rank
marginal.
\end{corollary}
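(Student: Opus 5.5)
The plan is to read everything off the threshold-cut formula~\eqref{eq:flow-rank} of Theorem~\ref{thm:threshold-cut}. Taking $\ell=0$ in~\eqref{eq:flow-rank} gives the term $d_0+\Proc{X}=\Proc{X}$, so $r(X)\leq\Proc{X}$ for every $X$. The first claim then reduces to showing that the remaining terms are all at least $\Proc{X}$ exactly when $X$ is feasible. For $1\leq\ell\leq m$, the $\ell$-th term minus $\Proc{X}$ equals
\[
d_\ell-\sum_{\alpha\in X:\alpha.d\leq d_\ell}\alpha.p=\Gap{X}{d_\ell}.
\]
Consequently
\[
r(X)=\Proc{X}+\min\Bigl(0,\min_{1\leq\ell\leq m}\Gap{X}{d_\ell}\Bigr).
\]

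For the equivalence, suppose first that $X$ is feasible. Lemma~\ref{lem:feasibleGap} gives $\Gap{X}{d}\geq 0$ for all positive reals $d$, so $r(X)=\Proc{X}$. Conversely, suppose $r(X)=\Proc{X}$. Then $\Gap{X}{d_\ell}\geq0$ for every $\ell\geq1$. Every job $\alpha\in X$ has $\alpha.d=d_{j(\alpha)}$ with $j(\alpha)\geq1$, so $\Gap{X}{\alpha.d}\geq0$ for all $\alpha\in X$. Lemma~\ref{lem:feasibleGap} then shows $X$ is feasible. The one point that needs care is that the deadlines $d_\ell$ are those of all of $S$ rather than just of $X$. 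This causes no difficulty, because the first direction of Lemma~\ref{lem:feasibleGap} covers arbitrary $d$, and the second direction only needs the deadlines of jobs in $X$, which appear among the $d_\ell$.

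For~\eqref{eq:full-marginal}, let $X$ be feasible, so $r(X)=\Proc{X}$ by the first part. Then
\[
r(X+\alpha)-r(X)=\alpha.p
\iff r(X+\alpha)=\Proc{X}+\alpha.p=\Proc{X+\alpha}.
\]
By the first part applied to $X+\alpha$, this holds exactly when $X+\alpha$ is feasible.

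I expect no real obstacle here. The only step needing attention is the bookkeeping that turns the $\ell$-th cut term into $\Proc{X}+\Gap{X}{d_\ell}$, together with the remark about deadlines just made. Note also that $\alpha\in S$ is required so that the flow network $\mathcal N(X+\alpha)$ is defined. Finally, one could add that $r(X+\alpha)-r(X)\leq\alpha.p$ always holds by monotonicity of the cut terms, which explains why ``full marginal'' is the natural phrasing.
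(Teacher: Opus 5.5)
Your proposal is correct and follows the paper's proof: both rewrite the $\ell$th term of \eqref{eq:flow-rank} as $\Proc{X}+\Gap{X}{d_\ell}$ (with the $\ell=0$ term equal to $\Proc{X}$), apply Lemma~\ref{lem:feasibleGap} to get the first assertion, and reduce \eqref{eq:full-marginal} to $r(X+\alpha)=\Proc{X+\alpha}$. Your remarks on the deadlines of $S$ versus those of $X$, and on the bound $r(X+\alpha)-r(X)\leq\alpha.p$, are accurate details that the paper leaves implicit.
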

\begin{proof}
The $\ell$th term of~\eqref{eq:flow-rank} is at least $\Proc{X}$ if and only
if $\ProcTwo{X}{d_\ell}\leq d_\ell$; the term for $\ell=0$ equals
$\Proc{X}$.  This proves the first assertion.  When $X$ is feasible,
\eqref{eq:full-marginal} is equivalent to
$r(X+\alpha)=\Proc{X+\alpha}$, so the first assertion applied to
$X+\alpha$ proves the second.
\end{proof}

\begin{theorem}[Scheduling polymatroid]
\label{thm:polymatroid}
The function $r:2^S\to\mathbb R_{\geq0}$ is normalized, monotone, and
submodular.  With integral input data it is integer-valued.
\end{theorem}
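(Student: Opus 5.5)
We will prove Theorem~\ref{thm:polymatroid} directly from the threshold-cut formula of Theorem~\ref{thm:threshold-cut}, which writes $r$ as a minimum of the functions
\[
  f_\ell(X)=d_\ell+\sum_{\alpha\in X:\,\alpha.d>d_\ell}\alpha.p,\qquad 0\leq\ell\leq m.
\]
Each $f_\ell$ is a constant plus a nonnegative modular function of $X$. Normalization is immediate: $r(\emptyset)=\min_\ell d_\ell=d_0=0$. Monotonicity is also immediate: each $f_\ell$ is nondecreasing in $X$ because processing times are positive, and a pointwise minimum of nondecreasing functions is nondecreasing. Nonnegativity follows since every $f_\ell\geq 0$. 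For integrality, if all deadlines and processing times are integers, then every $f_\ell(X)$ is an integer, and hence so is their minimum.

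The real content is submodularity. A minimum of modular functions is not submodular in general, so we will not rely on the formula alone. Instead we will exploit the max-flow structure, in the form of a standard ``coverage/transversal'' argument. We view $r(X)$ as the maximum amount of processing, with each $\alpha\in X$ offering $\alpha.p$ units, that can be routed to intervals $k\leq j(\alpha)$ subject to the capacities $\Delta_k$. This is the rank function of a polymatroid induced from a bipartite supply graph. Concretely, we will verify the diminishing-returns form
\[
  r(X+\alpha)-r(X)\;\geq\; r(Y+\alpha)-r(Y)\qquad\text{for } X\subseteq Y,\ \alpha\notin Y.
\]

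The plan is to prove this via the cut formula with a crossing (uncrossing) argument on the minimizing thresholds. Let $\ell_X$ be a minimizer for $r(X+\alpha)$, and let $\ell_Y$ be a minimizer for $r(Y)$. Put $a=\min(\ell_X,\ell_Y)$ and $b=\max(\ell_X,\ell_Y)$. The key intermediate inequality is
\[
  f_{\ell_X}(X+\alpha)+f_{\ell_Y}(Y)\;\geq\; f_a(\cdot)+f_b(\cdot),
\]
with the two right-hand arguments chosen as $X$ and $Y+\alpha$ in the appropriate order. This inequality would give $r(X+\alpha)+r(Y)\geq r(X)+r(Y+\alpha)$. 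To check it, we will compare term by term, using the following facts:
\begin{itemize}
\item the deadline constants sum to the same value $d_a+d_b$ on both sides;
\item job $\alpha$ contributes to $f_\ell$ exactly when $\ell<j(\alpha)$;
\item jobs in $Y\setminus X$ contribute only to the $Y$-based terms;
\item the threshold sets $\{\ell:\ell<j(\beta)\}$ are down-closed in $\ell$, so each job's contributions can be matched between the two sides.
\end{itemize}

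The main obstacle is choosing which side receives the smaller threshold $a$. The argument splits into cases according to whether $\ell_X\leq\ell_Y$ and where $j(\alpha)$ lies relative to $a$ and $b$. If the case analysis becomes messy, the fallback is a flow-augmentation argument. Take a maximum flow for $Y$ that extends a maximum flow for $X$; such a flow exists by monotone augmentation in the nested networks. Any augmenting path for $\alpha$ in the $Y$ network then projects to residual capacity in the $X$ network, which bounds the marginals in the required direction.
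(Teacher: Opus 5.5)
Your primary plan is the paper's argument. The paper uncrosses minimizing thresholds $i\leq j$ for arbitrary $X,Y$ along the decreasing chain $B_\ell=\{\alpha\in S:\alpha.d>d_\ell\}$, and you run the same uncrossing on the pair $X+\alpha$, $Y$, whose intersection and union are $X$ and $Y+\alpha$. The chain structure is exactly why the threshold formula alone suffices here, so the flow-augmentation fallback is unnecessary.

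The pairing you leave open is forced, and it needs no case analysis on $j(\alpha)$: give the intersection $X$ the smaller index $a$ and give $Y+\alpha$ the larger index $b$. Since $B_b\subseteq B_a$, the left side minus the right side is then $\alpha.p$ times the indicator of $d_a<\alpha.d\leq d_b$ when $\ell_X\leq\ell_Y$, and is $\Proc{(Y\setminus X)\cap(B_a\setminus B_b)}$ when $\ell_X>\ell_Y$; both are nonnegative, whereas the reverse pairing can fail.
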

\begin{proof}
Put $B_\ell=\{\alpha\in S:\alpha.d>d_\ell\}$.  These sets form a decreasing
chain, and
\[
  r(X)=\min_\ell\bigl(d_\ell+\Proc{X\cap B_\ell}\bigr).
\]
Normalization and monotonicity follow from this formula.  Choose minimizing
indices $i$ for $X$ and $j$ for $Y$; by symmetry assume $i\leq j$, so
$B_i\supseteq B_j$.  Job by job,
\[
 \Proc{X\cap B_i}+\Proc{Y\cap B_j}
 \geq\Proc{(X\cap Y)\cap B_i}+\Proc{(X\cup Y)\cap B_j}.
\]
After adding $d_i+d_j$, the right side is at least
$r(X\cap Y)+r(X\cup Y)$.  This proves submodularity.  Integral data make every
term in~\eqref{eq:flow-rank} integral.
\end{proof}

\begin{corollary}[Diminishing marginal capacity]
\label{cor:diminishing}
If $X\subseteq Y\subseteq S$ and $\alpha\notin Y$, then
\[
  r(X+\alpha)-r(X)
  \geq r(Y+\alpha)-r(Y).
\]
In particular, if $X$ and $Y$ are feasible and $\alpha$ cannot be added to
$X$, then it cannot be added to $Y$.
\end{corollary}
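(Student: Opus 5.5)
The first inequality is the diminishing-returns form of submodularity, and I will derive it from Theorem~\ref{thm:polymatroid}. Set $A=X+\alpha$ and $B=Y$. Because $X\subseteq Y$ and $\alpha\notin Y$, we have $A\cup B=Y+\alpha$ and $A\cap B=X$. Submodularity gives $r(A)+r(B)\geq r(A\cup B)+r(A\cap B)$, that is,
\[
r(X+\alpha)+r(Y)\geq r(Y+\alpha)+r(X).
\]
Rearranging yields the displayed inequality. Every quantity is a finite real, so the rearrangement is legitimate.

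For the ``in particular'' clause, I will apply Corollary~\ref{cor:full-marginal} twice. Since $X$ is feasible and $\alpha\notin X$, the hypothesis that $X+\alpha$ is infeasible means $r(X+\alpha)-r(X)\neq\alpha.p$. I also need the upper bound $r(X+\alpha)-r(X)\leq\alpha.p$. It follows from the threshold-cut formula~\eqref{eq:flow-rank}: in each term, adding $\alpha$ increases the sum by $\alpha.p$ if $\alpha.d>d_\ell$ and by $0$ otherwise. The term-wise increase is at most $\alpha.p$, so the minimum increases by at most $\alpha.p$. Together these give the strict inequality $r(X+\alpha)-r(X)<\alpha.p$.

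By the first part, $r(Y+\alpha)-r(Y)\leq r(X+\alpha)-r(X)<\alpha.p$. Since $Y$ is feasible and $\alpha\notin Y$, Corollary~\ref{cor:full-marginal} shows that $Y+\alpha$ is infeasible.

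The only step that is not bookkeeping is the bound $r(X+\alpha)-r(X)\leq\alpha.p$. It is what turns ``marginal $\neq\alpha.p$'' into ``marginal $<\alpha.p$'', and it follows directly from the term-wise comparison in~\eqref{eq:flow-rank}. Everything else is the standard equivalence between submodularity and diminishing returns.
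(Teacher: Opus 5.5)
Your proof is correct and follows the paper's route: diminishing returns from the submodularity of Theorem~\ref{thm:polymatroid} (applied to $X+\alpha$ and $Y$), then Corollary~\ref{cor:full-marginal} at $X$ and again at $Y$. Your term-wise bound $r(X+\alpha)-r(X)\leq\alpha.p$ from~\eqref{eq:flow-rank} is a welcome addition, since it justifies the step the paper states tersely as ``its marginal is smaller than $\alpha.p$''; Corollary~\ref{cor:full-marginal} alone only gives that the marginal differs from $\alpha.p$.
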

\begin{proof}
The displayed inequality is the diminishing-returns form of submodularity.
If $\alpha$ cannot be added to feasible $X$, its marginal is smaller than
$\alpha.p$ by Corollary~\ref{cor:full-marginal}.  Its marginal at $Y$ is no
larger, so it cannot be added to $Y$.
\end{proof}

A minimizing threshold in Theorem~\ref{thm:threshold-cut} is a rejection
certificate: it identifies a deadline prefix whose offered work exceeds its
available machine time.  If every processing time is $1$ and deadlines are
integral, then $r(X)$ is the maximum cardinality of a feasible subset of
$X$; the polymatroid rank specializes to the classical deadline-scheduling
matroid rank.

\subsection{Contraction to an Equal-Processing-Time Tier}

Fix a positive processing time $p$, let
$A\in\Greedy{\LessThan{S}{p}}$, and put $E=\EqualTo{S}{p}$.  For
$Y\subseteq E$, define the contracted rank
\[
  \rho_{A,p}(Y)=r(A\cup Y)-r(A).
\]
Let
\[
  0=\tau_0<\tau_1<\cdots<\tau_q
\]
be the distinct deadlines in $A\cup E$, and define
\begin{equation}
  D_k=\{\alpha\in E:\alpha.d\leq\tau_k\},\qquad
  c_k=\left\lfloor
      \frac{\tau_k-\ProcTwo{A}{\tau_k}}p
      \right\rfloor,\qquad
  b_k=\min_{j\geq k}c_j.
  \label{eq:flow-tier-data}
\end{equation}
Thus $D_0\subseteq\cdots\subseteq D_q=E$ and
$0=b_0\leq\cdots\leq b_q$.  

Since $A$ is feasible, $c_k\geq0$ for every $k$; since $c_0=0$, this
also gives $b_0=0$. Moreover, if $\alpha\in E$ has deadline $\tau_k$,
then
\[
\Level{A}{\alpha}
=
\left\lfloor
\frac{\SlackTwo{A}{\tau_k}}p
\right\rfloor
=
\min_{j\geq k}
\left\lfloor
\frac{\tau_j-\ProcTwo{A}{\tau_j}}p
\right\rfloor
=
b_k.
\]
Thus the level-set constraints of Theorem~\ref{thm:level} are
equivalently
\[
|I\cap D_k|\leq b_k
\qquad(0\leq k\leq q).
\]

\begin{lemma}[Contracted threshold formula]
\label{lem:contracted-threshold}
For every $Y\subseteq E$,
\begin{equation}
  \left\lfloor\frac{\rho_{A,p}(Y)}p\right\rfloor
  =\min_{0\leq k\leq q}
       \bigl(b_k+|Y\setminus D_k|\bigr).
  \label{eq:contracted-threshold}
\end{equation}
\end{lemma}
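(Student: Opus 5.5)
Compute $r(A\cup Y)$ and $r(A)$ with Theorem~\ref{thm:threshold-cut}, then quantize. Every deadline of $A\cup Y$ is one of the $\tau_k$, and a threshold of the original formula lying strictly between consecutive $\tau$'s only increases the term. So we can take the minimum over $\ell$ in $\{\tau_0,\ldots,\tau_q\}$ after checking that thresholds at other deadlines of $S$ never beat the nearest $\tau_k$ below them. Doing this for both sets gives
\[
r(A\cup Y)=\min_{0\leq k\leq q}\Bigl(\tau_k+\Proc{A}-\ProcTwo{A}{\tau_k}+p\,|Y\setminus D_k|\Bigr),
\]
using that every job of $Y$ has processing time $p$ and that $|Y\setminus D_k|$ counts the jobs of $Y$ due after $\tau_k$. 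Since $A$ is feasible, Corollary~\ref{cor:full-marginal} gives $r(A)=\Proc{A}$. Writing $g_k=\tau_k-\ProcTwo{A}{\tau_k}=\Gap{A}{\tau_k}\geq0$, we obtain
\[
\rho_{A,p}(Y)=\min_{0\leq k\leq q}\bigl(g_k+p\,|Y\setminus D_k|\bigr).
\]

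Next divide by $p$ and take floors. The floor is monotone and commutes with adding integers, so
\[
\left\lfloor\rho_{A,p}(Y)/p\right\rfloor=\min_k\bigl(c_k+|Y\setminus D_k|\bigr),
\]
with $c_k=\lfloor g_k/p\rfloor$ as in~\eqref{eq:flow-tier-data}.

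It remains to replace $c_k$ by $b_k=\min_{j\geq k}c_j$. Since $b_k\leq c_k$, the minimum over $k$ with $b_k$ is at most the minimum with $c_k$. Conversely, fix $k$ and let $j\geq k$ attain $b_k=c_j$. Because $D_k\subseteq D_j$, we have $|Y\setminus D_j|\leq|Y\setminus D_k|$, hence
\[
c_j+|Y\setminus D_j|\leq b_k+|Y\setminus D_k|.
\]
So each $b$-term dominates some $c$-term, and the two minima are equal. This yields~\eqref{eq:contracted-threshold}.

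The main obstacle is the first step: reducing the threshold index set from the deadlines of all of $S$ to the deadlines of $A\cup E$. Take a deadline $d_\ell$ of $S$ with $\tau_k\leq d_\ell<\tau_{k+1}$. Then $A\cup Y$ contains no job with deadline in $(\tau_k,d_\ell]$, so the sums in~\eqref{eq:flow-rank} at $d_\ell$ and at $\tau_k$ agree, while $d_\ell\geq\tau_k$. Hence the term at $d_\ell$ is never smaller. The same argument applies to $r(A)$, although there we can simply cite feasibility instead.
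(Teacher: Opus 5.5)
Your proposal is correct and follows essentially the same route as the paper: apply the threshold-cut formula, restrict the thresholds to the $\tau_k$, subtract $r(A)=\Proc{A}$, take floors to obtain the raw $c_k$ form, and then pass to the suffix minima $b_k$ using $D_k\subseteq D_j$. The restriction to the $\tau_k$ is valid because at a threshold between consecutive $\tau$'s the job sum is unchanged while the deadline grows. The only cosmetic difference is the order of two steps: you restrict the thresholds in $r(A\cup Y)$ before subtracting $r(A)$, whereas the paper subtracts first and restricts afterward.
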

\begin{proof}
Since $A$ is feasible, $r(A)=\Proc{A}$.  Write
\[
  g_A(d)=d-\ProcTwo{A}{d}.
\]
Subtracting $r(A)$ from~\eqref{eq:flow-rank} gives
\begin{equation}
  \rho_{A,p}(Y)
  =\min_\ell
     \bigl(g_A(d_\ell)
           +p|\{\alpha\in Y:\alpha.d>d_\ell\}|\bigr).
  \label{eq:contracted-cut}
\end{equation}
Between consecutive values of $\tau_0,\ldots,\tau_q$, the set term is
constant and $g_A(d)$ increases with $d$.  Hence the minimum in
\eqref{eq:contracted-cut} is attained at some $\tau_k$.  Taking floors gives
\begin{equation}
  \left\lfloor\frac{\rho_{A,p}(Y)}p\right\rfloor
  =\min_{0\leq k\leq q}
       \bigl(c_k+|Y\setminus D_k|\bigr).
  \label{eq:raw-contracted-rank}
\end{equation}

It remains to replace the raw capacities by their suffix minima.  Since
$b_k\leq c_k$,
\[
  \min_k\bigl(b_k+|Y\setminus D_k|\bigr)
  \leq
  \min_k\bigl(c_k+|Y\setminus D_k|\bigr).
\]
Conversely, for each $k$, choose $j\geq k$ with $b_k=c_j$.  Since
$D_k\subseteq D_j$,
\[
  c_j+|Y\setminus D_j|
  \leq b_k+|Y\setminus D_k|.
\]
The minimum of the raw terms is therefore no larger than every tightened
term.  Taking the minimum over $k$ proves the reverse inequality and
\eqref{eq:contracted-threshold}.
\end{proof}

\begin{theorem}[Tier rank from contracted flow]
\label{thm:contracted-rank}
For every $Y\subseteq E$,
\begin{equation}
  \operatorname{rank}_{\Matroid_{S,p}}(Y)
  =\left\lfloor\frac{\rho_{A,p}(Y)}p\right\rfloor.
  \label{eq:contracted-rank}
\end{equation}
Consequently, the right side is independent of the lower-tier tie order, and
$Y$ is independent in $\Matroid_{S,p}$ if and only if
$\rho_{A,p}(Y)=p|Y|$.
\end{theorem}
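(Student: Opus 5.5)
Our plan is to show that the right side of \eqref{eq:contracted-rank}, which Lemma~\ref{lem:contracted-threshold} rewrites as $\min_k\bigl(b_k+|Y\setminus D_k|\bigr)$, is exactly the rank function of the nested matroid with constraints $|I\cap D_k|\leq b_k$. By the discussion preceding Lemma~\ref{lem:contracted-threshold}, together with Theorem~\ref{thm:level}, these constraints define the independent sets of $\Matroid_{S,p}$. So the statement reduces to a rank formula for a nested matroid, plus two short consequences.

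\textbf{Upper bound.} Let $I\subseteq Y$ be independent. For every $k$,
\[
|I| = |I\cap D_k| + |I\setminus D_k| \leq b_k + |Y\setminus D_k|.
\]
Taking $I$ of maximum size gives $\operatorname{rank}(Y)\leq\min_k\bigl(b_k+|Y\setminus D_k|\bigr)$.

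\textbf{Lower bound.} We would build a large independent subset greedily. Scan $Y$ in nondecreasing deadline order, keeping a job whenever all constraints $|I\cap D_j|\leq b_j$ remain satisfied. Since the $D_j$ form a chain, adding a job $\alpha$ with deadline $\tau_k$ affects only the constraints with index $j\geq k$. If $\alpha$ is rejected, some constraint $j\geq k$ is already tight.

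Let $j^*$ be the largest index whose constraint is tight at the end, taking $j^*=0$ if none is (recall $b_0=0$ and $D_0=\emptyset$, so constraint $0$ is always tight). Every rejected job has deadline at most $\tau_{j^*}$, because it was blocked by a tight constraint $j\geq k$, and tightness persists, so $j\leq j^*$. Hence every job of $Y\setminus D_{j^*}$ is accepted. This gives
\[
|I| = b_{j^*} + |Y\setminus D_{j^*}|,
\]
so $|I|$ attains the minimum.

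The step that needs the most care is this persistence-of-tightness bookkeeping. In particular, when a job with deadline $\tau_k$ is blocked by a tight $j\geq k$, we must confirm that the largest final tight index $j^*$ satisfies $j^*\geq j\geq k$. This holds because tightness never disappears once created, since counts only increase. A fallback is to cite the standard laminar/nested matroid rank formula, but the direct greedy argument is short.

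\textbf{Consequences.} The left side of \eqref{eq:contracted-rank} does not depend on $A$. Any two choices of $A$, and hence any two lower-tier tie orders, therefore give the same floor value.

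For the independence criterion, note that $\rho_{A,p}(Y)\leq p|Y|$ always holds, by the $\ell=0$ term of \eqref{eq:contracted-cut}. So $\rho_{A,p}(Y)=p|Y|$ forces $\lfloor\rho_{A,p}(Y)/p\rfloor=|Y|$, hence $\operatorname{rank}(Y)=|Y|$ and $Y$ is independent. Conversely, if $Y$ is independent then $A\cup Y$ is feasible by the equivalence shown in the proof of Theorem~\ref{thm:level}. Corollary~\ref{cor:full-marginal} then gives $r(A\cup Y)=\Proc{A}+p|Y|$ and $r(A)=\Proc{A}$, so $\rho_{A,p}(Y)=p|Y|$.
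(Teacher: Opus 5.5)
Your proof is correct and follows the paper's overall structure. You use Lemma~\ref{lem:contracted-threshold} to reduce the claim to the rank formula $\min_k(b_k+|Y\setminus D_k|)$ for the nested matroid with constraints $|I\cap D_k|\le b_k$, and your counting upper bound is the same as the paper's.

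\textbf{Lower bound.} Here your witness differs from the paper's. The paper sets $h=\min_k(b_k+|Y\setminus D_k|)$ and checks directly that the $h$ latest-deadline jobs of $Y$ are independent. For each $k$, with $t=|Y\setminus D_k|$, either all chosen jobs avoid $D_k$ or exactly $h-t\le b_k$ of them lie in it. That is a static verification with no bookkeeping. Your greedy scan with the largest final tight index $j^*$ does more work, but it also yields an index at which the minimum is attained. It proves slightly more as well: your persistence-of-tightness argument never uses the scan order. So a greedy scan of $Y$ in any order reaches the rank, and the nondecreasing-deadline order you impose is unnecessary.

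\textbf{Independence criterion.} The paper gets both directions from \eqref{eq:contracted-rank} together with $\rho_{A,p}(Y)\le p|Y|$. If the rank is $|Y|$, then $\lfloor\rho_{A,p}(Y)/p\rfloor=|Y|$, which forces $\rho_{A,p}(Y)\ge p|Y|$. For the converse you instead go through feasibility of $A\cup Y$ and Corollary~\ref{cor:full-marginal}. That route is valid, but the rank formula you had just established already gives this direction directly.
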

\begin{proof}
By Theorem~\ref{thm:level}, a set $I\subseteq E$ is independent in
$\Matroid_{S,p}$ if and only if
\[
  |I\cap D_k|\leq b_k
  \qquad(0\leq k\leq q).
\]
Every independent $I\subseteq Y$ therefore satisfies
\[
  |I|\leq b_k+|Y\setminus D_k|
  \qquad(0\leq k\leq q).
\]
Hence the rank of $Y$ is at most the right side of
\eqref{eq:contracted-threshold}.

Let
\[
  h=\min_{0\leq k\leq q}
       \bigl(b_k+|Y\setminus D_k|\bigr).
\]
The term for $k=0$ shows that $h\leq|Y|$.  Choose $h$ jobs of $Y$ with
latest deadlines, breaking ties arbitrarily.  For each $k$, put
$t=|Y\setminus D_k|$.  If $t\geq h$, all chosen jobs lie outside $D_k$; if
$t<h$, exactly $h-t\leq b_k$ chosen jobs lie in $D_k$.  The chosen set is
independent and has size $h$.  Lemma~\ref{lem:contracted-threshold} now
proves~\eqref{eq:contracted-rank}.

Finally, the $d_0=0$ term in~\eqref{eq:contracted-cut} gives
$\rho_{A,p}(Y)\leq p|Y|$.  Thus the rank equals $|Y|$ if and only if
$\rho_{A,p}(Y)=p|Y|$.
\end{proof}

\begin{corollary}[SJF is tierwise Edmonds greedy]
\label{cor:tierwise-edmonds}
During tier $p$, a job is accepted if and only if it raises the rank of
$\Matroid_{S,p}$ by one.  Thus an SJF scan is an ordered sequence of ordinary
matroid-greedy phases, one for each processing-time tier.
\end{corollary}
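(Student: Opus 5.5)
We must prove Corollary~\ref{cor:tierwise-edmonds}: during tier $p$, the greedy scan accepts a job $\alpha\in E=\EqualTo{S}{p}$ exactly when it raises the rank of $\Matroid_{S,p}$ by one. The plan is to reduce acceptance to a statement about independence in $\Matroid_{S,p}$, and then use the elementary matroid fact that, for an independent $I$ and $\alpha\notin I$, $\operatorname{rank}(I+\alpha)=\operatorname{rank}(I)+1$ iff $I+\alpha$ is independent.

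First we fix an SJF order $\Seq$ of $S$. When the scan reaches tier $p$, the held set is $A=\Greedy{\Seq'}$, where $\Seq'$ is the prefix consisting of $\LessThan{S}{p}$. This is a member of $\Greedy{\LessThan{S}{p}}$, because greedy on the shorter jobs is unaffected by later jobs. When $\alpha\in E$ is considered, the held set is $A\cup I$ for some $I\subseteq E$ of previously accepted tier-$p$ jobs.

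Next we prove by induction along the tier that $I$ is always independent in $\Matroid_{S,p}$, and that $\alpha$ is accepted iff $I+\alpha$ is independent. The key step is the equivalence established inside the proof of Theorem~\ref{thm:level}: $Y\in\Ind_{S,p}$ iff $A\cup Y$ is feasible, for any $A\in\Greedy{\LessThan{S}{p}}$. Since $A\cup I$ is feasible as a held greedy set, $I$ is independent. Greedy accepts $\alpha$ iff $A\cup I+\alpha$ is feasible, which holds iff $I+\alpha\in\Ind_{S,p}$.

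Alternatively, this step can be phrased through the flow rank. Corollary~\ref{cor:full-marginal} says acceptance is equivalent to $r(A\cup I+\alpha)-r(A\cup I)=p$. Rewriting this as $\rho_{A,p}(I+\alpha)=p|I+\alpha|$, given $\rho_{A,p}(I)=p|I|$, Theorem~\ref{thm:contracted-rank} identifies it with independence of $I+\alpha$. This version matches the flow section's narrative, so I would present it and mention the first route as a check.

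Finally, because $I$ is independent, $\operatorname{rank}(I)=|I|$. Then $\operatorname{rank}(I+\alpha)$ equals $|I|+1$ exactly when $I+\alpha$ is independent, and equals $|I|$ otherwise. So acceptance coincides with raising the rank by one. This is the acceptance rule of the ordinary matroid greedy algorithm on $\Matroid_{S,p}$ with the induced tier order. Concatenating over tiers in increasing $p$ gives the ordered sequence of phases, and Lemma~\ref{lem:tierOrthog} confirms that each phase's matroid does not depend on earlier tie choices.

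The only delicate point is justifying that the held set at the start of tier $p$ is a legitimate $A\in\Greedy{\LessThan{S}{p}}$, so that Theorem~\ref{thm:level} and Theorem~\ref{thm:contracted-rank} apply with that same $A$. I expect this to be routine, since the greedy decisions on a prefix depend only on that prefix.
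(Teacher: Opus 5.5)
Your proposal is correct, and the flow route you say you would present is essentially the paper's proof: independence of the accepted slice $I$ gives $\rho_{A,p}(I)=p|I|$, Corollary~\ref{cor:full-marginal} turns acceptance into $\rho_{A,p}(I+\alpha)=p|I+\alpha|$, and Theorem~\ref{thm:contracted-rank} converts this into a unit increase in the rank of $\Matroid_{S,p}$. Your first route is also valid and avoids the flow machinery altogether; it uses the fact, from the proof of Theorem~\ref{thm:level}, that $Y\in\Ind_{S,p}$ if and only if $A\cup Y$ is feasible. Your explicit check that the held set entering tier $p$ lies in $\Greedy{\LessThan{S}{p}}$ fills in a step the paper leaves implicit.
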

\begin{proof}
Let $Y$ be the already accepted slice of tier $p$.  Since $Y$ is independent,
Theorem~\ref{thm:contracted-rank} gives
$\rho_{A,p}(Y)=p|Y|$.  By Corollary~\ref{cor:full-marginal}, the next job
$\alpha$ is accepted if and only if
\[
  \rho_{A,p}(Y+\alpha)=\rho_{A,p}(Y)+p.
\]
Theorem~\ref{thm:contracted-rank} identifies this condition with a unit
increase in the rank of $\Matroid_{S,p}$.
\end{proof}

\subsection{Fractional Optimization}

Let
\[
  P(r)=\{x\in\mathbb R_{\geq0}^{S}:x(Y)\leq r(Y)
         \text{ for all }Y\subseteq S\},
  \qquad
  x(Y)=\sum_{\alpha\in Y}x_\alpha.
\]
Here $x_\alpha$ is the amount of processing admitted from job $\alpha$; a
point of $P(r)$ may admit only part of a job.

\begin{corollary}[Linear optimization over the scheduling polymatroid]
\label{cor:poly-greedy}
Given nonnegative weights $w_\alpha$, a maximizer of $w^{\mathsf T}x$ over
$P(r)$ can be computed in $O(n\log n)$ time. For integral input, the returned
vector is integral in units of processing time, although it may allocate only
part of a job.
\end{corollary}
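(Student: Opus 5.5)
The plan is to invoke the polymatroid greedy algorithm of Edmonds.  Theorem~\ref{thm:polymatroid} shows that $r$ is normalized, monotone, and submodular.  For nonnegative weights, a maximizer of $w^{\mathsf T}x$ over $P(r)$ is then obtained as follows.

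\begin{enumerate}
\item Sort the jobs so that $w_{\alpha_1}\geq\cdots\geq w_{\alpha_n}$.
\item Set $x_{\alpha_i}=r(\{\alpha_1,\ldots,\alpha_i\})-r(\{\alpha_1,\ldots,\alpha_{i-1}\})$.
\end{enumerate}

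Correctness is classical, either by the LP-duality argument or via~\cite{Edmonds70,Fujishige05}.  Jobs with zero weight may be placed anywhere, since nonnegativity of $w$ is all that is needed.  Integrality is also immediate: with integral data, Theorem~\ref{thm:polymatroid} makes $r$ integer-valued, so every marginal $x_{\alpha_i}$ is an integer.  A marginal can be strictly smaller than $\alpha_i.p$, which is the sense in which only part of a job is allocated.

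The real content is the running time.  Sorting costs $O(n\log n)$, so I need to compute all $n$ prefix marginals in $O(\log n)$ time each.  I would do this with the augmented BST of Section~\ref{sec:augment}, holding a \emph{fractional} set of processing masses.  Concretely, I keep the current vector $x$ restricted to the prefix, storing at each deadline key the total mass $x$ assigned there.  The tree fields $x$ and $y$ are unchanged.

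The key claim, which is the main obstacle, is this.  Suppose the current masses are feasible, meaning every prefix satisfies $\sum_{\beta:\beta.d\leq d}x_\beta\leq d$.  Suppose also that $r(\{\alpha_1,\ldots,\alpha_{i-1}\})=\sum_{j<i}x_{\alpha_j}$.  Then
\[
r(\{\alpha_1,\ldots,\alpha_i\})-r(\{\alpha_1,\ldots,\alpha_{i-1}\})
=\min\bigl(\alpha_i.p,\SlackTwo{\cdot}{\alpha_i.d}\bigr),
\]
where the slack is computed on the masses as in Lemma~\ref{lem:feasibleToAdd}.

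I would prove the claim from the threshold-cut formula, Theorem~\ref{thm:threshold-cut}, by viewing the greedy vector as a flow in $\mathcal N$.  The prefix value equals $\min_\ell$ over two kinds of terms.  The first is $d_\ell$ plus the later-deadline work.  The second is the cut bound, and it splits according to whether the deadline of the new job is at most $d_\ell$ or greater.

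\begin{itemize}
\item When $\alpha_i.d>d_\ell$, the term grows by exactly $\alpha_i.p$.
\item When $\alpha_i.d\leq d_\ell$, the term is unchanged.
\end{itemize}

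So the new value is the minimum of the old value plus $\alpha_i.p$ and a quantity equal to the old value plus $\min_{t\geq\alpha_i.d}$ of the mass gap.  To make this step work I would reuse the Lemma~\ref{lem:feasibleToAdd} argument verbatim, with the set $X$ replaced by the mass vector.  This relies on the invariant that the old value equals the total mass and that every prefix gap is nonnegative.

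The gaps are exactly the $d_\ell$-terms minus the total mass, so their minimum over thresholds $d_\ell\geq\alpha_i.d$ is the slack.  One detail needs care: the minimum over $t\geq\alpha_i.d$ must be taken only at deadlines of $S$.  I would handle this by inserting $\alpha_i.d$ as a key first, possibly with zero mass, or simply by noting that gaps increase between keys.

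Each step then does the following.

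\begin{enumerate}
\item Query the slack at $\alpha_i.d$, which is a suffix-minimum query in the same BST and costs $O(\log n)$ by walking the root-to-key path and combining the $y$ fields.
\item Set $x_{\alpha_i}$ to $\min(\alpha_i.p,\text{slack})$.
\item Call $\TreeAdd{T}{\alpha_i.d}{x_{\alpha_i}}$, skipping the call when the value is zero.
\end{enumerate}

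The updated masses stay prefix-feasible, so the invariant is maintained.  Each step costs $O(\log n)$, giving $O(n\log n)$ overall.

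If the suffix-slack query turns out to be awkward in the given tree, I would fall back to the appendix variant.  The alternative is to test $\TreeSlack{T}$ after a tentative insertion.  Because the root $y$-field is a global minimum, the excess over zero is recovered as $-\TreeSlack{T}$ when it is negative.  The insertion is then adjusted by that amount, which is valid since all affected gaps drop by the same $\alpha_i.p$.
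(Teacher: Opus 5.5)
Your outer framework (Edmonds' greedy with $x_{\alpha_i}=r(U_i)-r(U_{i-1})$, integrality from Theorem~\ref{thm:polymatroid}) matches the paper. The difference is how you compute the marginals, and there the step you flag as the main obstacle does not follow from the argument you give.

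Write $U=U_{i-1}$, let $x$ be the greedy vector on $U$, and put $s_x=\min_{t\geq\alpha.d}\bigl(t-\sum_{\beta\in U,\,\beta.d\leq t}x_\beta\bigr)$. You assert that the mass gaps are ``exactly the $d_\ell$-terms minus the total mass.'' That is false. The $\ell$th term of~\eqref{eq:flow-rank} uses full processing times, so subtracting $x(U)$ gives the mass gap at $d_\ell$ plus $\sum_{\beta\in U,\,\beta.d>d_\ell}(\beta.p-x_\beta)$. This is positive whenever a later-deadline job was only partially allocated.

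For example, insert $(\beta.p,\beta.d)=(5,2)$ before $(\alpha.p,\alpha.d)=(1,1)$. Then $x_\beta=r(\{\beta\})=\min(5,6,2)=2$. At $d_\ell=1$, the threshold term minus the mass is $6-2=4$, but the mass gap is $1$. The claim still holds here (both minima are $0$, at $d_2=2$), but not term by term.

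Termwise you only get $r(U+\alpha)=\min\bigl(r(U)+\alpha.p,\,A\bigr)$, where $A$ is the minimum of the unchanged terms over $d_\ell\geq\alpha.d$. Since $x\leq p$, $A\geq r(U)+s_x$, so the marginal is at least $\min(\alpha.p,s_x)$. Rerunning Lemma~\ref{lem:feasibleToAdd} on the masses tells you how much can be added at $\alpha.d$ while keeping the masses prefix-feasible. It does not connect that amount to the marginal of $r$, which is defined through the full $\beta.p$. That link is what is missing.

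The reverse inequality is short to supply. The vector of the first $i$ marginals (zero elsewhere) lies in $P(r)$, as in Edmonds' theorem. For each $d_\ell\geq\alpha.d$, the set $Y_\ell$ of jobs in $U+\alpha$ with deadline at most $d_\ell$ has $r(Y_\ell)\leq d_\ell$ by the $\ell$th term of~\eqref{eq:flow-rank}. Hence $x_\alpha\leq d_\ell-\sum_{\beta\in U,\,\beta.d\leq d_\ell}x_\beta$. Together with $x_\alpha\leq r(\{\alpha\})\leq\alpha.p$, this gives $x_\alpha\leq\min(\alpha.p,s_x)$. With this lemma your BST route goes through: both the suffix-slack walk and the tentative-insertion fallback work, since a negative $\TreeSlack{T}$ equals $s_x-\alpha.p$.

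The paper avoids the issue entirely. It never represents the allocation. Instead it keeps the threshold values $z_\ell=d_\ell+\sum_{\beta\in U_i,\,\beta.d>d_\ell}\beta.p$ in a lazy segment tree over compressed deadlines. Inserting a job with deadline $d_j$ is a prefix addition to $z_0,\ldots,z_{j-1}$, and $r(U_i)$ is the global minimum, so correctness is immediate from Theorem~\ref{thm:threshold-cut}. Your repaired route buys a conceptual point: each polymatroid marginal is $\min(\alpha.p,s_x)$, a fractional form of Moore's acceptance test, computed on the paper's own BST. The cost is this extra lemma.
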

\begin{proof}
Order the jobs so that
$w_{\pi(1)}\geq\cdots\geq w_{\pi(n)}$, put
$U_i=\{\pi(1),\ldots,\pi(i)\}$, and set
\[
  x_{\pi(i)}=r(U_i)-r(U_{i-1}).
\]
Edmonds' polymatroid greedy theorem proves optimality
\cite{Edmonds70,Fujishige05}.  The marginals can be computed by maintaining
\[
  z_\ell=d_\ell+
  \sum_{\substack{\alpha\in U_i\\\alpha.d>d_\ell}}\alpha.p.
\]
Inserting a job of deadline $d_j$ adds its processing time to exactly
$z_0,\ldots,z_{j-1}$, and $r(U_i)=\min_\ell z_\ell$.  Coördinate
compression and a lazy segment tree support each prefix addition and global
minimum in $O(\log n)$ time.  Integrality follows from
Theorem~\ref{thm:polymatroid}.
\end{proof}

The tree and the flow rank are two resolutions of the same deadline-prefix
capacity.  The tree implements the full-marginal decision dynamically; the
flow rank packages all thresholds into one submodular function.  The
polymatroid does not replace the exchange argument: its greedy algorithm may
allocate only part of a job, and the family of feasible job sets need not be
a matroid.  Its discrete role is exact but tierwise.  After contraction by
the shorter jobs and division by $p$, its rank becomes the nested-matroid
rank tested during the processing-time-$p$ phase of SJF.


\section{Concluding Remarks}
\label{sec:concs}
We presented both an efficient implementation and a structural explanation of Moore's greedy algorithm.  By equipping a deadline-mass map with a two-number subtree summary, our augmented BST evaluates feasibility in worst-case $O(\log n)$ time, yielding a simple $O(n\log n)$-time implementation. We also proved that, over all SJF tie orders, the first $i$ accepted jobs are exactly the feasible $i$-sets of minimum total processing time.

At a deeper level, this behavior is governed by a nested matroidal structure. Although the overall family of feasible job-sets is not a matroid, the jobs chosen within each processing-time tier form a base of a nested matroid, independent of earlier tiers. An SJF scan decomposes into a sequence of ordinary matroid-greedy phases. The network flow formulation unifies this view with the implementation: its polymatroid rank measures the processing that can be accommodated by the deadline-prefix capacities, and the corresponding tier matroids arise from residual versions of this rank. The augmented BST, the tier matroids, and the flow polymatroid therefore describe the same underlying deadline-prefix structure.

\label{lastmain}
\bibliographystyle{plain}
\bibliography{mybib}

\appendix

\section{A Variant of the Augmented BST Data Structure}
\label{sec:variant}

In this section, we describe a variant of the augmented BST data
structure $T$ presented in Section~\ref{sec:augment}. This variant is
slightly more complicated to describe but has the advantage that the
tree is only modified when a job $\alpha$ is added to the current
feasible set $X$.  The fields maintained at each node and the
invariants relating $X$ and $T$ are the same as in
Section~\ref{sec:augment}.  This variant is based on two operations: a
read-only logarithmic-time $\TreeAdmits{T}{d}{p}$ operation and the
logarithmic-time $\TreeAdd{T}{d}{p}$ operation of
Section~\ref{sec:augment}. (The constant-time $\TreeSlack{T}$
operation is no longer needed.)  To determine whether a given job
$\alpha$ can be added to the current feasible set $X$ to obtain a
larger feasible set, we call $\TreeAdmits{T}{\alpha.d}{\alpha.p}$,
which returns a Boolean value indicating whether
$\SlackTwo{X}{\alpha.d}\geq\alpha.p$. Thus, by
Lemma~\ref{lem:feasibleToAdd}, $\TreeAdmits{T}{\alpha.d}{\alpha.p}$
returns $\True$ if and only if $X+\alpha$ is feasible.  If
$\TreeAdmits{T}{\alpha.d}{\alpha.p}$ returns $\True$, we add $\alpha$
to $X$ and call $\TreeAdd{T}{\alpha.d}{\alpha.p}$ to re-establish the
invariants with respect to the new feasible set $X$.

Note that in this variant, we never need to ``undo'' the effect of a
call $\TreeAdmits{T}{\alpha.d}{\alpha.p}$ as we did in
Section~\ref{sec:augment}; thus the $p$-value passed to any call to
$\TreeAdmits{T}{d}{p}$ is positive.  Below we complete our discussion
of the variant by presenting a read-only logarithmic-time
implementation of the $\TreeAdmits{T}{d}{p}$ operation.  Our
pseudocode is given in Algorithm~\ref{alg:admits}. 

The operation runs in $O(\log n)$ time: each iteration takes
constant time and either terminates or moves from a node to one of
its children.

\begin{algorithm}[H]
\caption{The read-only admits operation}
\label{alg:admits}
\begin{algorithmic}[1]
\Function{$T.\AdmitsOp$}{$d,p$}
  \State $u\gets\mbox{root of }T$
  \While{$u\not=\Nil$}
    \If{$u.d<d$}
       \State $p\gets p+u.\ell.x+u.p$
       \label{line:accumulate}
       \State $u\gets u.r$
       \label{line:right}
    \ElsIf{$\min(u.d,u.r.y) \geq p+u.\ell.x+u.p$}
       \State $u\gets u.\ell$
       \label{line:left}
    \Else
       \State \Return $\False$
       \label{line:false}
    \EndIf
  \EndWhile
  \State \Return $p\le d$
\EndFunction
\end{algorithmic}
\end{algorithm}

The key invariant associated with the while loop of
Algorithm~\ref{alg:admits} is ``the correct return value is
$\SlackTwo{X_u}{d}\geq p$''. It is easy to see that this invariant
holds the first time the while loop is reached. Assume that the
invariant holds at the beginning of a general iteration of the while
loop. There are two cases to consider.

Case~1: $u=\Nil$. In this case, the loop terminates and we need to
prove that the return value is equivalent to $\SlackTwo{X_u}{d}\geq
p$. This holds since $\SlackTwo{X_u}{d}=d$ when $u$ is $\Nil$, and the
algorithm returns $p\leq d$.

Case~2: $u\not=\Nil$. In this case, the body of the loop is executed
and we need to prove that the invariant holds before the next
iteration. We consider two subcases.

Case~2.1: $u.d<d$. In this case,
$\SlackTwo{X_u}{d}=\SlackTwo{X_{u.r}}{d}-u.\ell.x-u.p$ and hence
$\SlackTwo{X_u}{d}\geq p$ if and only if $\SlackTwo{X_{u.r}}{d}\geq
p+u.\ell.x+u.p$. Thus the assignment statements of
Lines~\ref{line:accumulate} and~\ref{line:right} re-establish the
invariant before the next iteration of the while loop.

Case~2.2: $u.d\geq d$.  In this case,
\[
\SlackTwo{X_u}{d}
=\min(\SlackTwo{X_{u.\ell}}{d},\min(u.d,u.r.y)-u.\ell.x-u.p)
\]
and we consider the following two subcases.

Case~2.2.1: $\min(u.d,u.r.y)\geq p+u.\ell.x+u.p$. In this case,
$\min(u.d,u.r.y)-u.\ell.x-u.p \geq p$ and hence $\SlackTwo{X_u}{d}\geq
p$ if and only if $\SlackTwo{X_{u.\ell}}{d}\geq p$. Thus the
assignment statement of Line~\ref{line:left} re-establishes the
invariant before the next iteration of the while loop.

Case~2.2.2: $\min(u.d,u.r.y)<p+u.\ell.x+u.p$. In this case,
$\SlackTwo{X_u}{d}<p$, and hence it is correct to return $\False$ in
Line~\ref{line:false}.

\end{document}